\newtheorem{theorem}{Theorem}[section]
\newtheorem{conjecture}[theorem]{Conjecture}
\newtheorem{problem}[theorem]{Problem}
\newtheorem{proposition}[theorem]{Proposition}
\newtheorem{claim}{Claim}
\newenvironment{proof}{\noindent {\bf Proof.}}{\rule{3mm}{3mm}\par\medskip}
\newcommand{\ENDproof}{~\rule{3mm}{3mm}\medskip\par}
\title{Extremal  numbers of disjoint triangles in $r$-partite graphs}
\author{\small   Junxue Zhang\footnote{Corresponding author} \\ {\small  Center for Combinatorics and LPMC, Nankai University, Tianjin 300071, China}\\
{\small Email: jxuezhang@163.com}}
\date{}
\begin{document}\maketitle
\begin{abstract}
For two graphs $G$ and $F$, the extremal number of $F$ in $G$, denoted by  {ex}$(G,F)$,  is the maximum number of edges in a spanning subgraph of $G$ not  containing $F$ as a subgraph. 
  Determining {ex}$(K_n,F)$ for a given graph $F$
   is a classical extremal problem  in graph theory.   
 In 1962, 	Erd\H{o}s  determined {ex}$(K_n,kK_3)$, which generalized Mantel's Theorem. 	  
On the other hand, in 1974,  {Bollob\'{a}s}, Erd\H{o}s, and Straus 
	  determined {ex}$(K_{n_1,n_2,\dots,n_r},K_t)$, which  extended Tur\'{a}n's Theorem to complete multipartite graphs. 
{ In this paper,}
 we determine {ex}$(K_{n_1,n_2,\dots,n_r},kK_3)$ for {$r\ge 4$ and  $10k-4\le n_1+4k\le n_2\le n_3\le \cdots \le n_r$.  }


	\noindent{\textbf{Keywords:}}  extremal numbers;  multipartite graphs; disjoint triangles\\
	\end{abstract}

\section{Introduction}
All graphs  considered in this paper are finite and simple. 
 We follow \cite{WD2001} for undefined notation and terminology.  Let $G=G(V(G),E(G))$ be a graph, where $V(G)$ is the vertex set  and $E(G)$ is the edge set.  Denote $e(G)=|E(G)|$.
 For any subset $S\subseteq V(G)$, we use   $G[S]$ to  denote the subgraph of $G$ induced by $S$ and 
let  $G-S=G[V(G)\setminus S]$.  
{Given two disjoint vertex sets $V_1$ and $V_2$, the {\it join} $V_1\vee V_2$ is the edge set obtained by joining each vertex of $V_1$ to each vertex of $V_2$. Furthermore,  
	given two graphs $G$ and $H$,
	the {\it join} $G\vee H$ is the graph obtained from the disjoint union of $G$ and $H$  by joining each vertex of $G$ to each vertex of $H$. }
Let $r,t,k$ be three positive integers. For convenience, we write  $[r]=\{1,2,3,...,r\}$ in the context. 
Denote by $K_t$  the complete graph on $t$ vertices
 and  $kK_t$ the disjoint union of  $k$ copies of $K_t$. Moreover, for 
$r$ positive integers $n_1,n_2,\dots, n_r$, we use $K_{n_1,n_2,\dots,n_r}$ to denote the complete $r$-partite graph with parts of sizes $ n_1,n_2,\dots,n_r$.

Let {ex}$(G,H)=\max\{e(G'):$  $|V(G')|=|V(G)|$, $G'\subseteq G$, and $ H\nsubseteq G'\}$ and call it the {\it extremal number} of $H$ in $G$. Let $n$ and $t$ be two integers with $n\ge t$. 
 The classical Tur\'{a}n problem considers the  case $G= K_n$, i.e.,  determining the value of {ex}$(K_n,H)$ for a given $H$. For instance, the well-known Mantel's Theorem \cite{MW1907} and Tur\'{a}n's Theorem \cite{TP1941} determined {ex}$(K_n,K_3)$ and {ex}$(K_n,K_t)$, respectively.
  Tur\'{a}n's Theorem is {widely} considered to be the first extremal theorem {on graphs} and {initiated} the study of extremal graph theory. {Let~
  $T_r(n)$ be} a balanced complete $r$-partite
 graph on $n$ vertices, that is, any two parts of sizes
 differ by at most one.  
 In 1959, Erd\H{o}s and Gallai \cite{EPGT1959} determined {ex}$(K_n, kK_2)$ for any positive integers $n$ and $k$. 
 Later, Erd\H{o}s  \cite{EP1962} proved {ex}$(K_n, kK_3)=e(K_{k-1}\vee T_{2}(n-k+1))$ for  $n > 400(k-1)^2$ and  Moon \cite{MWJ1968}  proved that for $n> \frac{9k}{2} + 4$. Moreover,
 Moon \cite{MWJ1968} (only when { $n-k + 1$ is divisible by $p$})
 and Simonovits \cite{S1968} showed that $K_{k-1}\vee T_{r-1}(n-k+1)$  is the unique extremal graph 
 containing no copy of  $k K_{r}$ for sufficiently
 large $n$. 
 {The extremal problem on multipartite graphs} was first considered by {Bollob\'{a}s}, Erd\H{o}s, and Straus \cite{BBEPSGE1974}. They  determined {ex}$(K_{n_1,n_2,\dots,n_r},K_t)$ for $r\ge t\ge 2$. Later,  
Chen, Li, and Tu \cite{CHLXTJ2009} determined {ex}$(K_{n_1,n_2},kK_2)$. Recently,
 De Silva, Heysse and Young \cite{SDJHKYM} proved  {ex}$(K_{n_1,n_2,\dots,n_r},kK_2)=(k-1)(n_2+n_3+\cdots+n_r)$ for $n_1\le n_2\le \dots\le n_r$.


 {To make it easier to state the  results in this paper, we define a function $f$.}
Let $x_1,x_2,\dots,x_r$ { be  positive integers and   $t\ge 2$ be an integer}. Note that $\{x_1, x_2,\dots,x_r\}$ is a multiset.
 For any subset $P\subseteq[r]$,  let $x_P:=\sum_{i\in P}x_i$.  Now, we define the desired function $f$ as follows.
For $t=2$, let $f_2(\{x_1,x_2,\dots,x_r\})= 0$.
For $t\ge 3$,  let
$f_t(\{x_1,x_2,\dots,x_r\})= \max \limits_{\mathcal{P}} ~\{\sum \limits_{1\le i<j\le t-1}x_{P_i}\cdot x_{P_j}\}$, where $\mathcal{P}=(P_1,P_2,\dots,P_{t-1})$ is a partition of $[r]$
{ into $t-1$ nonempty parts.}
 For convenience,
{we simply write $f_t(x_1,x_2,\dots,x_r)$ for $f_t(\{x_1,x_2,\dots,x_r\})$.} 
{Notice that  $f_r(x_1,x_2,\dots,x_r)=(\sum_{1\le i<j\le r}x_ix_j)-x_1x_2$} if $x_1\le x_2\le \dots\le x_r$.  

\begin{theorem}\label{partitekt}{\em\cite{BBEPSGE1974}}
	Let $n_1, n_2, \dots, n_r,t$ be  positive integers with $r\ge t\ge 2$. Then\\ {ex}$(K_{n_1,n_2,\dots,n_r},K_t)=f_t(n_1,n_2,\dots,n_r)$. 
\end{theorem}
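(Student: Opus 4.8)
The plan is to establish the two inequalities separately. For the lower bound, pick a partition $\mathcal{P}=(P_1,\dots,P_{t-1})$ of $[r]$ into $t-1$ nonempty parts attaining the maximum that defines $f_t(n_1,\dots,n_r)$, and for $i\in[t-1]$ let $V_i$ be the union of the classes of $K_{n_1,\dots,n_r}$ whose indices lie in $P_i$, so $|V_i|=n_{P_i}$. The spanning subgraph $G^{\ast}$ whose edges are precisely the pairs joining two distinct sets $V_i,V_j$ is $(t-1)$-partite, hence $K_t$-free, and each such pair joins two distinct classes of $K_{n_1,\dots,n_r}$, so $G^{\ast}\subseteq K_{n_1,\dots,n_r}$. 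Since $e(G^{\ast})=\sum_{1\le i<j\le t-1}n_{P_i}n_{P_j}=f_t(n_1,\dots,n_r)$, we get $\mathrm{ex}(K_{n_1,\dots,n_r},K_t)\ge f_t(n_1,\dots,n_r)$.

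For the upper bound I would run Zykov-type symmetrization in two stages. Fix a $K_t$-free spanning subgraph $G\subseteq K_{n_1,\dots,n_r}$ with the maximum number of edges. In the first stage, symmetrize within the classes of the host: handling the classes one at a time, in class $i$ pick a vertex $u$ of maximum $G$-degree and give every other vertex of class $i$ the neighborhood $N_G(u)$. Since $N_G(u)$ misses class $i$, the result remains a subgraph of $K_{n_1,\dots,n_r}$; it stays $K_t$-free, because a $K_t$ in the new graph either avoids class $i$, and is then already present in $G$, or meets class $i$ in one vertex, of neighborhood $N_G(u)$, so that replacing that vertex by $u$ exhibits a $K_t$ in $G$; and the edge count cannot drop, since $n_i\,d_G(u)\ge\sum_{v\in\,\text{class }i}d_G(v)$. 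A routine check shows that symmetrizing a later class leaves the earlier classes homogeneous (their vertices still sharing a common neighborhood), so after $r$ rounds every class of the host consists of vertices with one common neighborhood. Then $G$ is a blow-up: there is a graph $H$ on vertex set $[r]$ with $G$ obtained from $H$ by inflating vertex $i$ to an independent set of size $n_i$; hence $e(G)=\sum_{ij\in E(H)}n_in_j$, and since a clique of $G$ meets each class at most once, $G$ is $K_t$-free if and only if $H$ is.

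It remains to maximize $\sum_{ij\in E(H)}n_in_j$ over all $K_t$-free graphs $H$ on $[r]$ — a weighted Tur\'{a}n problem, and the step I expect to be the most technical. Take such an $H$ of maximum weight. I would first prove that non-adjacency in $H$ is transitive: if $a,b,c$ satisfy $ab,bc\notin E(H)$ and $ac\in E(H)$, then, writing $d_w(x)=\sum_{j\sim x}n_j$, when $d_w(b)<d_w(a)$ or $d_w(b)<d_w(c)$ we may give $b$ the neighborhood $N_H(a)$ or $N_H(c)$, keeping $H$ $K_t$-free and strictly increasing its weight; and when $d_w(b)\ge d_w(a)$ and $d_w(b)\ge d_w(c)$, giving both $a$ and $c$ the neighborhood $N_H(b)$ does the same, the crux being that the loss of the edge $ac$ is outweighed by a $+\,n_an_c$ gain. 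Either way maximality is contradicted, so $H$ is complete multipartite.

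Finally, $H$ has at most $t-1$ parts, as a complete $t$-partite graph contains $K_t$; and it has at least $t-1$ parts, since otherwise (using $r\ge t$) some part has two vertices and splitting that part in two keeps $H$ complete multipartite and $K_t$-free while strictly increasing its weight, again contradicting maximality. Hence $H$ has exactly $t-1$ parts, which realize a partition $\mathcal{P}=(P_1,\dots,P_{t-1})$ of $[r]$, so $e(G)=\sum_{ij\in E(H)}n_in_j=\sum_{1\le i<j\le t-1}n_{P_i}n_{P_j}\le f_t(n_1,\dots,n_r)$. Together with the lower bound this proves the theorem. The main obstacle is the bookkeeping in the weighted symmetrization: one must check that each exchange strictly raises $\sum_{ij\in E(H)}n_in_j$, so that the process really terminates at a complete $(t-1)$-partite blow-up and not merely at some local optimum.
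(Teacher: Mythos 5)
Theorem~\ref{partitekt} is quoted from Bollob\'as--Erd\H{o}s--Straus and the paper gives no proof of it, so there is nothing internal to compare against; judged on its own, your argument is correct and complete. The lower bound via the complete $(t-1)$-partite blow-up of an optimal partition is exactly the construction the paper uses in Section~\ref{lower} (specialized to $k=1$), and your upper bound is the standard weighted Zykov symmetrization, which is essentially how the original result is proved. All the delicate points check out: symmetrizing class $i$ onto a maximum-degree vertex keeps the graph inside the host (its neighborhood avoids class $i$), preserves $K_t$-freeness (a $K_t$ meeting class $i$ in one vertex transfers to $u$), does not decrease the edge count, and leaves previously symmetrized classes homogeneous because adjacency of a homogeneous class to the newly symmetrized one is decided by a single membership test. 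In the weighted Tur\'an step, the key inequality in the two-sided replacement is indeed the $+\,n_an_c$ term, giving a strict increase, so non-adjacency is an equivalence relation and $H$ is complete multipartite; the part count is pinned to exactly $t-1$ using $r\ge t$ and positivity of the weights. This yields $e(G)\le\max_{\mathcal P}\sum_{i<j}n_{P_i}n_{P_j}=f_t(n_1,\dots,n_r)$ as required.
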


By replacing the forbidden graph $K_t$ with $kK_t$, 
De Silva et al. 
 \cite{SDJHKKASAYM2018} considered a special case $t=r$ and 
  raised the problem about 
{ex}$(K_{n_1,n_2,\dots,n_r},kK_t)$ for $r>t$.
\begin{theorem}{\em\cite{SDJHKKASAYM2018}}\label{rparkkr}
	For any integers $r\ge 2$ and $1\le k\le n_1\le n_2\le \cdots\le n_r$, we have 
	\begin{align*}
		{ex}(K_{n_1,n_2, \dots, n_r}, kK_r)&=\sum_{1\le i<j\le r}n_in_j-n_1n_2+(k-1)n_2\\&{=}(k-1)(n-n_1)+
	{f_{r}(n_1-(k-1),n_2,\dots,n_r),}
	\end{align*}
 {where $n=\sum n_i$. }
\end{theorem}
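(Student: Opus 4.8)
\medskip
\noindent{\bf Proof idea.}
For the lower bound, the plan is to exhibit a $kK_r$-free subgraph of $K:=K_{n_1,n_2,\dots,n_r}$ with $\sum_{1\le i<j\le r}n_in_j-n_1n_2+(k-1)n_2$ edges. Fix a set $S$ of $k-1$ vertices inside the smallest part $V_1$ and let $G_0=K-\{xy: x\in V_1\setminus S,\ y\in V_2\}$, obtained by deleting the $(n_1-k+1)n_2$ edges between $V_1\setminus S$ and $V_2$. Then $e(G_0)=\sum_{i<j}n_in_j-(n_1-k+1)n_2$, which is the claimed value and, using $f_r(x_1,\dots,x_r)=\sum_{i<j}x_ix_j-x_1x_2$ for $x_1\le\cdots\le x_r$, also equals $(k-1)(n-n_1)+f_r(n_1-(k-1),n_2,\dots,n_r)$. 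Moreover $G_0$ is $kK_r$-free: any $K_r$ in $G_0$ is a transversal of the parts, its $V_1$- and $V_2$-vertices are adjacent, so its $V_1$-vertex lies in $S$; hence at most $|S|=k-1$ copies of $K_r$ in $G_0$ can be pairwise disjoint.

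For the upper bound, I would induct on $k$. The base case $k=1$ is exactly Theorem~\ref{partitekt}. Let $k\ge2$, let $G\subseteq K$ be $kK_r$-free, and write $E_k=\sum_{i<j}n_in_j-n_1n_2+(k-1)n_2$; we want $e(G)\le E_k$. If $G$ contains no $(k-1)K_r$, the induction hypothesis (valid as $1\le k-1\le n_1$) gives $e(G)\le E_{k-1}\le E_k$, so we may assume $G$ has a maximum family $\mathcal F=\{T_1,\dots,T_{k-1}\}$ of disjoint copies of $K_r$. Put $W=\bigcup_iV(T_i)$, so $|W\cap V_i|=k-1$ for every $i$, and let $u_i\in V(T_i)\cap V_1$. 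Maximality of $\mathcal F$ makes $G-W$ a $K_r$-free subgraph of $K_{n_1-k+1,\dots,n_r-k+1}$, so Theorem~\ref{partitekt} bounds $e(G-W)\le f_r(n_1-k+1,\dots,n_r-k+1)$; together with $e(G[W])\le\binom r2(k-1)^2$ and the trivial $e_G(W,V\setminus W)\le e_K(W,V\setminus W)=(k-1)(r-1)(n-(k-1)r)$, a short computation gives $e(G)\le E_k+(k-1)(n_1-k+1)$. So the entire task is to sharpen the count of edges between $W$ and $V\setminus W$ by $(k-1)(n_1-k+1)=(k-1)|V_1\setminus W|$, i.e.\ to find that many non-edges of $G$ among the $K$-edges joining $W$ to $V\setminus W$.

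The natural candidates are, for each $i\in[k-1]$ and each $w\in V_1\setminus W$, a non-edge of $G$ between $w$ and $V(T_i)\setminus\{u_i\}$ (all $r-1$ of these are $K$-edges since $w\in V_1$, and the pairs are distinct as $(i,w)$ ranges), so it would suffice to produce one for every pair $(i,w)$. If for some pair there is none, i.e.\ $w$ is joined in $G$ to all of $V(T_i)\setminus\{u_i\}$, then $(V(T_i)\setminus\{u_i\})\cup\{w\}$ is another $K_r$ and swapping it for $T_i$ yields a new maximum family in which $u_i$ is uncovered. The plan is therefore to choose $\mathcal F$ extremal for a suitable secondary quantity (for instance maximizing $e(G[W])$, or the number of non-edges of $G$ incident to $W$) and to show that a chain of such swaps, if it were available, could be assembled into an augmenting configuration freeing $r$ new vertices and hence producing a $k$-th disjoint $K_r$ — contradicting maximality; where necessary this should be combined with the Bollob\'{a}s--Erd\H{o}s--Straus structure of $K_r$-free multipartite graphs to pin down $G-W$. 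I expect this swap/augmentation analysis — controlling the interaction between the $k-1$ cliques, the uncovered vertices of $V_1$, and the structure of $G-W$ — to be the main obstacle; once $e_G(W,V\setminus W)\le e_K(W,V\setminus W)-(k-1)(n_1-k+1)$ is established, the three estimates sum to $e(G)\le E_k$ and the induction closes. (Equivalently, one may phrase the whole step as iteratively deleting copies of $K_r$ while tracking the potential $e(H)-\mathrm{ex}(K_H,K_r)$ via Theorem~\ref{partitekt}, the crux being that at each stage some $K_r$ is incident to enough non-edges.)
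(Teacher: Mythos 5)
First, a point of reference: Theorem \ref{rparkkr} is stated in this paper as a quotation from De Silva, Heysse, Kapilow, Schenfisch and Young \cite{SDJHKKASAYM2018}; no proof of it appears here, so your argument has to be judged on its own terms rather than against an internal proof. Your lower bound is correct and complete: the construction (delete all edges between $V_1\setminus S$ and $V_2$ for a fixed $(k-1)$-set $S\subseteq V_1$), the edge count, the identity $f_r(x_1,\dots,x_r)=\sum_{i<j}x_ix_j-x_1x_2$, and the observation that every $K_r$ in a multipartite graph is a transversal whose $V_1$-vertex must lie in $S$ are all right; this is exactly the Section \ref{lower} construction specialised to $t=r$.

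The upper bound, however, has a genuine gap, and you flag it yourself (``I expect this swap/augmentation analysis \dots to be the main obstacle''). Worse, the reduction you set up before the gap is already not valid. You reduce the problem to finding, for every $i\in[k-1]$ and every $w\in V_1\setminus W$, a non-edge of $G$ between $w$ and $V(T_i)\setminus\{u_i\}$, to be enforced by a secondary extremal choice of $\mathcal F$ and a chain of swaps. Take $r=3$, $k=2$, and let $G$ consist of all edges from $V_1$ to $\{v_2,v_3\}$ (with $v_2\in V_2$, $v_3\in V_3$) together with the single edge $v_2v_3$. Every triangle of $G$ contains both $v_2$ and $v_3$, so $G$ is $2K_3$-free; every maximum family is a single triangle $T_1=\{u_1,v_2,v_3\}$ with $u_1\in V_1$; and every $w\in V_1\setminus\{u_1\}$ is adjacent to all of $V(T_1)\setminus\{u_1\}$. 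Hence no choice of $\mathcal F$ and no amount of swapping produces even one of your candidate non-edges; in this graph the entire deficit is carried by $e(G-W)$ sitting far below $f_3(n_1-1,n_2-1,n_3-1)$, not by missing edges between $W$ and $V\setminus W$ of the shape you prescribe. The actual content of the theorem is precisely this three-way trade-off between the slack in $e(G-W)$, in $e(G[W])$, and in $e_G(W,V\setminus W)$, and your outline does not engage with it: the swap observation only shows that another maximum family exists, which is not by itself a contradiction. A workable route is the deletion induction used in \cite{SDJHKKASAYM2018} and mirrored for $kK_3$ in Section \ref{upper} of this paper: in a minimal counterexample one either finds a vertex $v$ with $G-v$ free of $(k-1)K_r$, or a single copy of $K_r$ whose removal costs sufficiently few edges. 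As written, only the lower bound of the theorem is established.
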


\begin{problem}{\em\cite{SDJHKKASAYM2018}}\label{problem}
	Determine {ex}$(K_{n_1,n_2,\dots,n_r},kK_t)$ for $r>t$.
\end{problem}

Han and Zhao \cite{HJZY2022+} determined {ex}$(K_{n_1,n_2,n_3,n_4},kK_3)$ for a sufficiently large integer  $n_1$ with  $n_1\le n_2\le n_3\le n_4$. 

{ 
\begin{theorem}\label{4tk3}{\em\cite{HJZY2022+}}
Let $n_1, \dots, n_4$ be sufficiently large and 
$  n_1\le n_2\le n_3\le n_4$.	For any integer $k\ge 1$,  we have 
	\begin{align*}
ex(K_{n_1, n_2, n_3, n_4}, kK_3)&=(k-1)(n-n_1)+f_3(n_1-{(k-1)},n_2,n_3, n_4)\\&=
	\left\{
	\begin{array}{lcl}
	n_4(n_1+n_2+n_3)+(k-1)(n_2+n_3), &&{\text{if}~n_4> n_2+n_3};\\
	(n_1+n_4)(n_2+n_3)+(k-1)n_4, &&{\text{if}~n_4\le n_2+n_3}.\\
	\end{array}\right.	\end{align*}
\end{theorem}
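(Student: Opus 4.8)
Let me think about how I would prove this. The statement is: for sufficiently large $n_1 \le n_2 \le n_3 \le n_4$, and any $k \ge 1$,
$$ex(K_{n_1,n_2,n_3,n_4}, kK_3) = (k-1)(n-n_1) + f_3(n_1-(k-1), n_2, n_3, n_4),$$
with the explicit evaluation depending on whether $n_4 > n_2 + n_3$ or not.

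**Lower bound construction.** First I would establish the lower bound by constructing a $kK_3$-free spanning subgraph with the claimed number of edges. The natural construction mimics the Erdős / Simonovits extremal graph: take $k-1$ vertices (from the smallest part $V_1$, say), join them to everything, and on the remaining graph $K_{n_1-(k-1), n_2, n_3, n_4}$ place a $K_3$-free subgraph achieving $ex(K_{n_1-(k-1), n_2, n_3, n_4}, K_3) = f_3(n_1-(k-1), n_2, n_3, n_4)$ by Theorem \ref{partitekt}. Any $kK_3$ would need to use at most $k-1$ triangles through the universal vertices and hence a $K_3$ in the $K_3$-free part — contradiction. The edge count is $(k-1)(n-(k-1)) + \binom{k-1}{2} + f_3(\dots)$; after the customary regrouping (moving the $\binom{k-1}{2}$ edges among the $k-1$ special vertices into the bipartite tally) this equals $(k-1)(n-n_1) + f_3(n_1-(k-1), n_2, n_3, n_4)$. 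I would also check that the $f_3$ term splits into the two displayed cases depending on how the optimal bipartition $\mathcal P = (P_1, P_2)$ of $\{1,2,3,4\}$ places the parts: grouping $\{1,4\}$ versus $\{2,3\}$, or grouping $\{1,2,3\}$ versus $\{4\}$, whichever product is larger — this is an elementary optimization over the few bipartitions of a $4$-set.

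**Upper bound — the main work.** This is where the real content lies, and it is the main obstacle. I would argue by induction on $k$. The base case $k=1$ is exactly Theorem \ref{partitekt} with $t=3$. For the inductive step, suppose $G \subseteq K_{n_1,n_2,n_3,n_4}$ is $kK_3$-free with $e(G) > (k-1)(n-n_1) + f_3(n_1-(k-1), n_2, n_3, n_4)$; I want a contradiction. If $G$ is $K_3$-free then $e(G) \le f_3(n_1,n_2,n_3,n_4)$, which one checks is strictly smaller than the assumed bound (for large $n_i$), so $G$ contains a triangle $T$; in fact, since $G$ is not $(k-1)K_3$-free-plus-a-little, $G$ contains many vertex-disjoint triangles. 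Take a maximum set of $k$ vertex-disjoint triangles is impossible, so take a maximum family — it has exactly $k-1$ of them if we are careful, but the cleaner route is: pick one triangle $T$ spanning three vertices, apply induction to $G - V(T)$, which is $(k-1)K_3$-free, to get $e(G - V(T)) \le (k-2)(n' - n_1') + f_3(\dots)$ where $n', n_1'$ refer to the reduced part sizes; then $e(G) \le e(G-V(T)) + (\text{edges at } V(T))$, and edges at $V(T)$ is at most $3n - O(1)$ minus the internal edges. The difficulty is that this crude bound loses too much: the gap between the target and what this gives is roughly $n - n_1$ per step, and the naive edge count at a triangle can be as large as $\sim 2n$. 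So I would need to choose the triangle $T$ cleverly — specifically a triangle minimizing (or suitably controlling) its degree-sum into the rest of $G$ — and combine with a stability/defect argument showing that if $G$ is close to extremal then it is close in structure to the conjectured extremal graph, at which point removing a low-degree triangle gives the exact bound.

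**Handling the two regimes and the structural analysis.** The case split $n_4 \gtrless n_2 + n_3$ reflects two different near-extremal structures (put the $K_3$-free bipartite graph with side $V_4$ against $V_1 \cup V_2 \cup V_3$, versus sides $V_1 \cup V_4$ against $V_2 \cup V_3$), so the stability argument must identify which regime $G$ is in and route accordingly; a weight/entropy-free counting comparing $\sum_i n_i$-weighted degrees should detect this. Concretely, after fixing a maximum family $\mathcal T$ of $t \le k-1$ disjoint triangles in $G$, the graph $G - V(\mathcal T)$ is triangle-free, hence bounded by Theorem \ref{partitekt} applied to the reduced multipartite host; the edges incident to $V(\mathcal T)$ are bounded by considering, for each triangle in $\mathcal T$, how many common neighbors its vertices can have (common neighbors would create more disjoint triangles or a $K_3$ off $\mathcal T$, contradicting maximality or $kK_3$-freeness after a swap). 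Balancing these two contributions and optimizing the placement of $V(\mathcal T)$ relative to the four parts — this is the delicate case analysis, and it is here that the hypothesis that the $n_i$ are large (so that linear-in-$n$ main terms dominate the $O(k^2)$ and $O(kn_1)$ error terms, and in the paper's more general theorem the explicit inequality $10k-4 \le n_1 + 4k \le n_2$) is used — yields the matching upper bound and completes the induction. I expect the bookkeeping in this last balancing step, together with verifying that equality forces the extremal structure, to be the longest and most technical part of the argument.
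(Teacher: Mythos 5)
This theorem is quoted from Han and Zhao \cite{HJZY2022+}; the present paper does not reprove it, but its Section 4 proves the closely analogous general statement by exactly the kind of induction-on-$k$ you describe, so your strategy can be compared against that. Your lower-bound paragraph is essentially right in spirit but contains a slip: the $k-1$ special vertices all lie in the smallest part $V_1$ of the \emph{multipartite host}, so they cannot be joined ``to everything'' nor to each other; they are joined precisely to $V(G)\setminus V_1$, which gives $(k-1)(n-n_1)$ directly --- there is no $\binom{k-1}{2}$ term and no regrouping needed. This matters because the whole point of the multipartite version is that the coefficient is $n-n_1$ rather than $n-k+1$.

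The upper bound, however, is where your proposal has a genuine gap: you correctly identify that the crude ``remove a triangle and induct'' step loses too much, but you then defer the entire fix to an unspecified ``stability/defect argument'' and ``delicate case analysis,'' which is precisely the content of the theorem. Two concrete ingredients are missing. First, you need a mechanism that makes the degree sum of the removed triangle at most $n+O(k)$: the argument used here is that every vertex-deleted subgraph $G-\{v\}$ still contains $(k-1)K_3$ (else a direct count already contradicts extremality), whence by inclusion--exclusion $\sum_i|N(u_i)|\le n+\sum_{i<j}|N(u_i)\cap N(u_j)|$, and if some pair of triangle vertices has more than $6(k-1)$ common neighbours one can swap in a new third vertex avoiding two fixed copies of $(k-1)K_3$ so that the new triangle has all pairwise common neighbourhoods of size $O(k)$; your parenthetical gestures at this but does not pin down the swap. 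Second, and more importantly, the induction does not close without quantitative perturbation estimates for $f_3$ when part sizes are decremented by the removed triangle (the analogues of Propositions \ref{f3property} and \ref{f3property2} here), together with a case split on whether the triangle meets $V_1$ --- this is what produces the shift $n_1\mapsto n_1-(k-1)$ in the first argument of $f_3$ and absorbs the $O(k)$ error terms using the largeness of the $n_i$. Without these, your sketch establishes only that \emph{some} bound of the form $(k-1)n+f_3(\cdots)+O(k)$ holds, not the exact value claimed.
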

}

In addition, Han and  Zhao \cite{HJZY2022+} proposed a conjecture on {ex}$(K_{n_1,n_2,\dots,n_r},kK_t)$. 
\begin{conjecture}{\em\cite{HJZY2022+}}\label{conj-kK_t}
Given three integers $k$, $r$, and $t$ with $ r>t\ge 3$ and $k\ge 2$, let $n_1,\dots,n_r$ be sufficiently large. For $I\subseteq [r]$, write $m_I:={\min}_{i\in I}n_i$. Given a partition $\mathcal{P}$ of $[r]$, let $n_{\mathcal{P}}:={\max}_{I\in \mathcal{P}}\{n_I-m_I\}$. Then
	\begin{align*}
	ex(K_{n_1,n_2,\dots,n_r},kK_t)=\mathop{max}\limits_{\mathcal{P}}\left\{(k-1)n_{\mathcal{P}}+\sum_{I\ne {I'}\in {\mathcal{P}}}n_I\cdot n_{I'}\right\},
	\end{align*}
	where the maximum is taken over all partitions $\mathcal{P}$ of $[r]$ into $t-1$ parts.  
\end{conjecture}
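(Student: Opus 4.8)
The plan is to prove matching bounds, with the lower bound coming from an explicit construction and the upper bound from induction on $k$ with Theorem~\ref{partitekt} as base case. For the lower bound, fix a partition $\mathcal{P}=(I_1,\dots,I_{t-1})$ of $[r]$ into $t-1$ parts and let $I^*\in\mathcal{P}$ attain $n_{\mathcal{P}}=n_{I^*}-m_{I^*}$, where $m_{I^*}$ is the size of the smallest original class inside $I^*$. Start from the complete $(t-1)$-partite graph whose parts are the unions $\bigcup_{i\in I_j}V_i$ of original classes; this is a $K_t$-free spanning subgraph of $K_{n_1,\dots,n_r}$ with $\sum_{I\ne I'\in\mathcal{P}}n_I n_{I'}$ edges. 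Now pick $k-1$ vertices $S$ inside the smallest original class of $I^*$ (possible since the sizes are large, so $m_{I^*}\ge n_1\ge k-1$) and add all edges from $S$ to the remaining original classes of $I^*$; those classes hold $n_{I^*}-m_{I^*}$ vertices, so this adds exactly $(k-1)n_{\mathcal{P}}$ legal edges. The result is $kK_t$-free: the only edges inside a super-part are incident to $S$, and any $K_t$ must place two of its vertices in a common super-part by pigeonhole, so it meets $S$; since $|S|=k-1$, at most $k-1$ disjoint copies exist. Maximizing over $\mathcal{P}$ gives the claimed lower bound.

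For the upper bound I would induct on $k$. Write $\Phi(n_1,\dots,n_r;k)$ for the right-hand side of the conjecture; the base case $k=1$ is exactly Theorem~\ref{partitekt}, since then $\Phi=f_t(n_1,\dots,n_r)$. For the inductive step let $G$ be an extremal $kK_t$-free subgraph. Because $r>t-1$, every partition into $t-1$ parts has a part with at least two indices, so $n_{\mathcal P}\ge 1$ and hence $\Phi(n_1,\dots,n_r;k)\ge f_t+(k-1)>f_t$; thus $G$ contains a copy $Q$ of $K_t$. As $G$ is $r$-partite, $Q$ uses one vertex from each of $t$ distinct original classes, and $G-V(Q)$ is $(k-1)K_t$-free. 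Writing $n'_i=n_i-1$ on the $t$ classes met by $Q$ and $n'_i=n_i$ otherwise, induction gives $e(G-V(Q))\le\Phi(n';k-1)$, so it suffices to find a copy $Q$ with
\[
\sum_{v\in V(Q)}\deg_G(v)-\binom{t}{2}\ \le\ \Phi(n_1,\dots,n_r;k)-\Phi(n';k-1).
\]
A direct computation on the extremal graph above shows this holds with equality for a canonical clique $Q$ (a vertex of $S$, a vertex of the other classes of $I^*$, and one vertex in each remaining super-part): the left side equals $(t-1)n-m_{I^*}+(k-1)-\binom{t}{2}$, while expanding $\sum n_In_{I'}-\sum n'_In'_{I'}=(t-1)n-n_{I^*}-\tfrac12(t-2)(t+1)$ and combining with $(k-1)n_{\mathcal P}-(k-2)n'_{\mathcal P}=n_{\mathcal P}+(k-2)$ gives the right side as $(t-1)n-m_{I^*}-\tfrac12(t-2)(t+1)+(k-2)$; the leading terms agree and the constants $(k-1)-\binom{t}{2}$ and $(k-2)-\tfrac12(t-2)(t+1)$ are equal. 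So the induction is tight precisely on the conjectured configuration.

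The hard part is locating such a low-degree clique in an arbitrary extremal $G$ and controlling how the maximizing partition and the quantity $n_{\mathcal P}$ behave under the reduction $n\mapsto n'$. Two difficulties interact. First, the optimal partition for $\Phi(\cdot\,;k-1)$ on the reduced sizes need not be the one used for $G$, so one must prove a robustness lemma showing the maximizing partition is essentially stable when $t$ sizes drop by one and $n_1$ is large; this is where the max-over-partitions bookkeeping and the case analysis of which super-part hosts the dominating set concentrate. Second, there is no a priori reason that $G$ contains a $K_t$ whose total degree is as small as in the extremal example. I would resolve this with a stability argument in the spirit of Erd\H{o}s--Simonovits: first show that any $kK_t$-free $G$ with $e(G)$ within $o(n^2)$ of $\Phi$ is structurally close to the canonical construction --- a complete $(t-1)$-partite skeleton on some partition together with at most $k-1$ near-dominating vertices confined to one super-part --- and then use this approximate structure both to exhibit a clique meeting the degree budget and to force $G$ into the exact extremal form.

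The regimes $k=1$ (Theorem~\ref{partitekt}), $t=r$ (Theorem~\ref{rparkkr}), and $(r,t)=(4,3)$ (Theorem~\ref{4tk3}) serve as consistency checks for the identities above and as templates for the stability step. The genuinely new effort, and the step I expect to be the main obstacle, is proving the robustness lemma uniformly in $r$ and $t$ and turning approximate structure into exact extremality when the dominating set may be distributed across several small original classes of one super-part.
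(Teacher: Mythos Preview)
The statement you are trying to prove is labeled a \emph{Conjecture} in the paper, and it remains a conjecture there: the paper does not give a proof of it. What the paper actually establishes is (i) the lower bound in general (Theorem~\ref{kk3lower}), (ii) the equivalence of this conjecture with the reformulation in Conjecture~\ref{conj-z} (Proposition~\ref{equi}), and (iii) the upper bound only in the special case $t=3$, $r\ge 4$, under the additional hypotheses $10k-4\le n_1+4k\le\min_{i\ge 2}n_i$ (Theorem~\ref{mainkk3}).

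Your lower-bound construction is correct and is essentially a reparametrization of the paper's. The paper places the $k-1$ special vertices in the globally smallest class $V_1$ and joins them to \emph{all} of $V_2\cup\cdots\cup V_r$, then optimizes the $(t-1)$-partition on what remains; you instead place them in the smallest class of a chosen super-part and join them only inside that super-part. Proposition~\ref{equi} is exactly the computation showing these two viewpoints give the same value, so either construction certifies the lower bound.

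For the upper bound, your proposal is a plan, not a proof, and you already flag the two real obstacles: controlling how the maximizing partition (and $n_{\mathcal P}$) shifts when $t$ coordinates drop by one, and locating a copy of $K_t$ in an arbitrary extremal $G$ whose total degree meets the budget. Neither step is carried out, and the appeal to an Erd\H{o}s--Simonovits-type stability argument is a hope rather than an argument. The paper does not resolve these issues in general either; for $t=3$ it bypasses stability entirely and instead proves two explicit monotonicity inequalities for $f_3$ (Propositions~\ref{f3property} and~\ref{f3property2}) and then runs a delicate case analysis on a single deleted triangle (Claims~\ref{abadvertex}--\ref{findtri6k}), exploiting that any two vertices of a triangle with large common neighbourhood can be used to find many triangles. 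That machinery is specific to $t=3$ and to the assumed gap $n_2\ge n_1+4k$; it does not extend to general $t$ in any obvious way. So your inductive/stability outline is more ambitious than what the paper achieves, but as written it is not a proof of the conjecture, and the conjecture remains open.
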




Actually, we  can obtain an equivalent statement of Conjecture \ref{conj-kK_t} as follows, equivalence will be proved in Section  \ref{fproperty}. 

\begin{conjecture}\label{conj-z} 
Given three integers $k$, $r$, and $t$ with $ r>t\ge 3$ and $k\ge 2$, 	let $n_1\le n_2\dots\le n_r$ be integers with  $\sum_{i=1}^{r}n_i=n$. 
	If $n_1$ is sufficiently large,
	then {ex}$(K_{n_1,n_2,\dots,n_r},kK_t)= (k-1)(n-n_1)+f_t(n_1-(k-1),n_2,\dots,n_r)$.
\end{conjecture}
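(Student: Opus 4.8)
The plan is to prove the stated equality by establishing a matching lower bound (a construction) and an upper bound (an extremal argument); throughout I write $g(k)=(k-1)(n-n_1)+f_t(n_1-(k-1),n_2,\dots,n_r)$ for the target value.

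\emph{Lower bound.} I would realise the conjectured extremal graph as follows. Fix a set $D$ of $k-1$ vertices inside the smallest part $V_1$, and let $H$ be a $K_t$-free spanning subgraph of the complete $r$-partite graph on the remaining vertices, namely $K_{n_1-(k-1),n_2,\dots,n_r}$, with the maximum number of edges; by Theorem~\ref{partitekt} we have $e(H)=f_t(n_1-(k-1),n_2,\dots,n_r)$. Form $G$ from $H$ by joining every vertex of $D$ to every vertex outside $V_1$. Since $D\subseteq V_1$, these new edges are all present in $K_{n_1,\dots,n_r}$ and number exactly $(k-1)(n-n_1)$, so $e(G)=g(k)$. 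As $G-D=H$ is $K_t$-free, every copy of $K_t$ in $G$ meets $D$; because $|D|=k-1$, no $k$ of them can be pairwise disjoint, so $G$ is $kK_t$-free. This gives ${ex}(K_{n_1,\dots,n_r},kK_t)\ge g(k)$.

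\emph{Upper bound.} I would argue by induction on $k$. For $k=1$ the claim is exactly Theorem~\ref{partitekt}. For the inductive step let $G$ be an extremal $kK_t$-free spanning subgraph. If $G$ is already $K_t$-free, then $e(G)\le f_t(n_1,\dots,n_r)$, and it remains to check the purely numerical inequality $f_t(n_1,\dots,n_r)\le g(k)$; this reduces to a Lipschitz-type bound showing that decreasing the first argument of $f_t$ by $k-1$ costs at most $(k-1)(n-n_1)$, which I would isolate as a lemma on $f_t$ (of the kind collected in Section~\ref{fproperty}) and which holds comfortably once $n_1$ is large. Otherwise $G$ contains a copy of $K_t$ on a vertex set $T$; since $r>t$, its vertices lie in $t$ distinct parts, indexed by $I\subseteq[r]$ with $|I|=t$. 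Maximality forces $G-T$ to be $(k-1)K_t$-free, so the induction hypothesis applies to $G-T$, whose parts have sizes $n_i'=n_i-1$ for $i\in I$ and $n_i'=n_i$ otherwise. Writing $n'=n-t$ and bounding the edges meeting $T$ by $\sum_{i\in I}(n-n_i)-\binom{t}{2}$, the whole step comes down to the single inequality
\begin{equation*}
(k-2)(n'-n_1')+f_t(n_1'-(k-2),n_2',\dots,n_r')+\sum_{i\in I}(n-n_i)-\binom{t}{2}\le (k-1)(n-n_1)+f_t(n_1-(k-1),n_2,\dots,n_r),
\end{equation*}
an inequality about $f_t$ and the part sizes alone. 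For a copy of $K_t$ using the $t$ largest parts the terms $\sum_{i\in I}(n-n_i)$ are as small as possible, and the inequality should then follow from the same $f_t$-estimates; here the hypothesis that $n_1$ is large is precisely what lets the $\binom{t}{2}$ term and the discrepancies between $f_t(\cdot,n_2',\dots,n_r')$ and $f_t(\cdot,n_2,\dots,n_r)$ be absorbed.

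\emph{Main obstacle.} The genuine difficulty, and the reason the statement remains a conjecture for general $t$, is the mismatch between the $K_t$ that $G$ happens to contain and the one the counting wants: $G$ need not contain any $K_t$ on a favourable index set $I$, so the crude degree bound $\sum_{i\in I}(n-n_i)$ can be too weak, and one cannot simply delete an arbitrary $K_t$. Overcoming this requires a stability step pinning down the structure of the extremal $G$, showing that it has a set of at most $k-1$ vertices whose deletion leaves a $K_t$-free graph, and that these vertices are essentially dominating and lie in the smallest part, after which the exact bound follows. For $t=3$ the auxiliary optimisation defining $f_t$ is a single balanced bipartition of the parts, which keeps both the $f_t$-inequalities and the stability analysis tractable; for $t\ge 4$ the optimal partition into $t-1$ super-parts can shift as vertices are removed, and controlling these shifts simultaneously with the structural argument is the crux that the general conjecture leaves open.
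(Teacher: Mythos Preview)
The statement you are asked about is a conjecture; the paper does not prove it in general. What the paper establishes is the lower bound unconditionally (Theorem~\ref{kk3lower}) and the matching upper bound only for $t=3$ under the extra hypothesis $10k-4\le n_1+4k\le n_2$ (Theorem~\ref{mainkk3}). Your lower-bound construction is identical to the paper's, and your acknowledgment that the general upper bound remains open is correct.

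Where you and the paper diverge is in how the partial upper bound is obtained. Your inductive shell (minimal counterexample in $k$, delete a copy of $K_t$ on $T$, apply induction to $G-T$) is also the paper's, but the paper does \emph{not} bound the edges meeting $T$ by the crude $\sum_{i\in I}(n-n_i)-\binom{t}{2}$. That bound is off by order $n$ in the worst case and cannot close the induction even for $t=3$. Instead the paper uses inclusion--exclusion to get $\sum_i d_G(u_i)\le n+\sum_{i<j}|N_G(u_i)\cap N_G(u_j)|$ (Claim~\ref{Claim-oneK3}), and the proof of Theorem~\ref{mainkk3} is then a case analysis on these pairwise common neighbourhoods: if all three are at most $6(k-1)$ the numerics go through directly via Propositions~\ref{f3property} and~\ref{f3property2}; if one is large, a pigeonhole step (Claim~\ref{findtri6k}), fed by the auxiliary fact that $G-\{v\}$ already contains $(k-1)K_3$ for every $v$ (Claim~\ref{abadvertex}), produces a replacement vertex $u_0$ so that the new triangle $u_0u_2u_3$ has two of its common neighbourhoods of size at most $3(k-1)$, after which a further split on whether the triangle meets $V_1$ finishes.

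So the mechanism the paper actually uses for $t=3$ is not the stability route you sketch (locating $k-1$ near-dominating vertices in $V_1$); it is an iterative refinement of the deleted triangle to force small common neighbourhoods. That device is tailored to triangles, which is why neither your outline nor the paper settles the conjecture for $t\ge4$.
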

 
 To support Conjecture \ref{conj-z}, 
  we provide a simple construction, which gives a lower bound for {ex}$(K_{n_1,n_2,\dots,n_r},kK_t).$

 \begin{theorem}\label{kk3lower}
 	Let $k$ and $t$ be two integers with $k\ge 2$ and $ t\ge 3$. 	Let $n_1,n_2,\dots,n_r$ be $r$ integers with {$n_1\le \min_{i\ge 2}  \{n_i\}$}, $ n_1\ge k$, $r\ge t$, and $\sum_{i=1}^{r}n_i=n$. 
 	Then {ex}$(K_{n_1,n_2,\dots,n_r},kK_t)\ge (k-1)(n-n_1)+f_t(n_1-(k-1),n_2,\dots,n_r)$. 
 \end{theorem}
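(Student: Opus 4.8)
The plan is to prove the bound by exhibiting an explicit spanning subgraph $G$ of $K_{n_1,n_2,\dots,n_r}$ that contains no copy of $kK_t$ and has exactly $(k-1)(n-n_1)+f_t(n_1-(k-1),n_2,\dots,n_r)$ edges; the construction mimics the classical $K_{k-1}\vee T_{t-1}$ extremal configuration but is adapted to the multipartite host. Write $V_1,V_2,\dots,V_r$ for the parts of $K_{n_1,\dots,n_r}$ with $|V_i|=n_i$. Since $n_1\ge k$, we may fix a set $S\subseteq V_1$ with $|S|=k-1$, and set $W:=V(K_{n_1,\dots,n_r})\setminus S$, so that $G[W]$ is a complete $r$-partite graph with part sizes $n_1-(k-1),n_2,\dots,n_r$ (note $n_1-(k-1)\ge 1$). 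Then let $G$ be the graph on the same vertex set whose edge set is the union of (i) all edges of the join $S\vee(V_2\cup\cdots\cup V_r)$, and (ii) the edges of a $K_t$-free subgraph $H$ of $G[W]$ with the maximum possible number of edges.

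First I would verify the edge count. Every vertex of $S$ lies in $V_1$, hence is joined in $K_{n_1,\dots,n_r}$ to precisely the $n-n_1$ vertices outside $V_1$, so part (i) contributes $(k-1)(n-n_1)$ edges. For part (ii), since $r\ge t$, Theorem \ref{partitekt} gives $e(H)=\mathrm{ex}(K_{n_1-(k-1),n_2,\dots,n_r},K_t)=f_t(n_1-(k-1),n_2,\dots,n_r)$. The two edge sets are disjoint, since every edge in (i) is incident with $S$ while every edge of $H$ lies inside $W$. Hence $e(G)=(k-1)(n-n_1)+f_t(n_1-(k-1),n_2,\dots,n_r)$, and $G$ is visibly a spanning subgraph of $K_{n_1,\dots,n_r}$.

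It remains to show $G$ contains no $kK_t$, which is the crux (though it is short). Suppose for contradiction that $T_1,\dots,T_k$ are pairwise vertex-disjoint copies of $K_t$ in $G$. As $|S|=k-1$ and the $T_j$ are vertex-disjoint, the pigeonhole principle yields some $j$ with $V(T_j)\cap S=\varnothing$, i.e.\ $T_j\subseteq G[W]$. But by construction $G[W]=H$ is $K_t$-free (every edge of $G$ not lying inside $W$ is incident with $S$, so $G[W]$ consists exactly of the edges of $H$), a contradiction. The main point to be careful about is precisely this identity $G[W]=H$, together with checking that $n_1-(k-1)$ is a legitimate positive part size so that Theorem \ref{partitekt} may be invoked; both are immediate from $n_1\ge k$. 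I note that the hypothesis $n_1\le\min_{i\ge 2}n_i$ is not needed for this lower bound, although it is what makes taking $S$ inside the smallest part the natural (and, conjecturally, optimal) choice.
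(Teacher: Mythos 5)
Your proposal is correct and takes essentially the same approach as the paper: the paper also places $k-1$ apex vertices (its $V_0$, playing the role of your $S\subseteq V_1$) joined to everything outside $V_1$, and puts an extremal $K_t$-free graph on the remaining vertices — it just realizes that extremal graph explicitly as the complete $(t-1)$-partite graph given by the optimal partition defining $f_t$, whereas you cite Theorem \ref{partitekt} for its edge count. The $kK_t$-freeness argument (every $K_t$ must meet the $(k-1)$-set) is identical.
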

 
 Furthermore, 
 with  Theorem \ref{kk3lower}, we  confirm  Conjecture \ref{conj-z} for $t=3$, $r\ge 4$, and $n_1+4k\le n_2$.

  \begin{theorem}\label{mainkk3}
 	Let $r,n_1,n_2,\dots,n_r$ be  integers with {$r\ge 4$, $10k-4\le n_1+4k\le  \min_{i\ge 2} \{n_i\}$} and $\sum_{i=1}^{r}n_i=n$. Then 
 	{ex}$(K_{n_1,n_2,\dots,n_r},kK_3)= (k-1)(n-n_1)+
 	f_3(n_1-(k-1),n_2,\dots,n_r)$. 
 \end{theorem}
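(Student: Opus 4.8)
The plan is to match the lower bound of Theorem~\ref{kk3lower} with the upper bound $\mathrm{ex}(K_{n_1,\dots,n_r},kK_3)\le (k-1)(n-n_1)+f_3(n_1-(k-1),n_2,\dots,n_r)$ by induction on $k$, the case $k=1$ being Theorem~\ref{partitekt}. For $k\ge 2$ I assume the bound for $k-1$; the hypotheses transfer, since $n_1+4k\le\min_{i\ge2}n_i$ gives $n_1+4(k-1)\le\min_{i\ge2}n_i$ and $n_1\ge 6k-4$ gives $10(k-1)-4\le n_1+4(k-1)$. Suppose for contradiction that $G\subseteq K_{n_1,\dots,n_r}$ has no $kK_3$ but $e(G)\ge (k-1)(n-n_1)+f_3(n_1-(k-1),n_2,\dots,n_r)+1$. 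I would first isolate two elementary properties of $f_3$ (of the type to be established in Section~\ref{fproperty}): (a) $f_3(n_1,\dots,n_r)\le f_3(n_1-(k-1),n_2,\dots,n_r)+(k-1)(n-n_1)$, obtained by taking an optimal bipartition of $\{n_1,\dots,n_r\}$, noting the side not containing part $1$ has total at most $n-n_1$, and restoring the $k-1$ deleted units to part $1$; and (b) a Lipschitz-type bound on the decrease of $f_3$ when a few part-sizes drop by~$1$. Property~(a) immediately handles the case that $G$ is triangle-free, since then $e(G)\le\mathrm{ex}(K_{n_1,\dots,n_r},K_3)=f_3(n_1,\dots,n_r)$, a contradiction. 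So $G$ has a triangle; moreover applying the inductive bound for $k-1$ to $G$ (together with the easily checked monotonicity $\mathrm{ex}(K_{n_1,\dots,n_r},(k-1)K_3)\le\mathrm{ex}(K_{n_1,\dots,n_r},kK_3)$, again via~(b)) shows $G\supseteq (k-1)K_3$, so a maximum family $\mathcal{T}$ of vertex-disjoint triangles in $G$ has exactly $k-1$ members.

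The clean mechanism driving the proof is: \emph{if there is a set $S$ of $k-1$ vertices, all in part $1$, with $G-S$ triangle-free}, then we are done, because $G[S]$ has no edges, $e_G(S,V(G)\setminus S)\le (k-1)(n-n_1)$, and $G-S$ (a triangle-free spanning subgraph of $K_{n_1-(k-1),n_2,\dots,n_r}$) satisfies $e(G-S)\le f_3(n_1-(k-1),n_2,\dots,n_r)$ by Theorem~\ref{partitekt}; summing gives exactly the desired bound. This is precisely the extremal configuration of Theorem~\ref{kk3lower}, so the task becomes: show such an $S$ exists, or else reach a contradiction. Writing $W=\bigcup_{T\in\mathcal{T}}V(T)$ and $H=G-W$ (triangle-free, $|W|=3(k-1)$), I would split into two sub-tasks. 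First, using that $n_1+4k\le\min_{i\ge2}n_i$ makes part $1$ robustly the unique smallest part with more than $3(k-1)$ spare vertices, apply standard augmenting moves — if $T=xyz\in\mathcal{T}$, $x$ has two adjacent neighbours $a,b\in V(H)$, and $y,z$ have a common neighbour $c\in V(H)\setminus\{a,b\}$, replace $T$ by the two disjoint triangles $xab$ and $yzc$ — to argue that $\mathcal{T}$ may be re-chosen so that every member meets part $1$. Second, once every $T\in\mathcal{T}$ contains a part-$1$ vertex, let $S$ be the set of these; a triangle of $G-S$ would by maximality share a vertex with some $T\in\mathcal{T}$, and ruling out that this shared vertex is a non-part-$1$ vertex of $T$ — via the augmenting move again, feeding off the edge count of $G$ and the B-E-S bound on $H$ — shows $G-S$ is triangle-free, finishing the case.

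The step I expect to be the real obstacle is making the re-routing (and its possible failure) precise, i.e.\ handling configurations in which part $1$ genuinely cannot be forced into all of $\mathcal{T}$. There the triangles, or most of them, live inside $\bigcup_{i\ge2}(\text{part }i)$, and the natural response is to pass to $K_{n_2,\dots,n_r}$ and induct on $r$: for $r=4$ the base is Theorem~\ref{rparkkr} (with three parts), for $r\ge5$ the statement for $r-1$ parts, which is exactly why $r\ge4$ is required; one then compares $\mathrm{ex}(K_{n_2,\dots,n_r},kK_3)+(k-1)n_2$ against $\mathrm{ex}(K_{n_1,\dots,n_r},kK_3)$ using Property~(b). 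The delicate point is that the gap hypothesis $n_2+4k\le n_3$ is \emph{not} inherited, so one must argue around it — presumably by first showing a counterexample may be assumed to have all triangles meeting part $1$, or to be within bounded edit-distance of the extremal graph, so that the reduced instance only ever needs the weaker features of the hypothesis. Throughout, the numerics $10k-4$ and $n_1+4k\le\min_{i\ge2}n_i$ are precisely what keeps the $3(k-1)$ vertices of $W$ from disturbing the $f_3$-optimal bipartition and guarantees the room in part~$1$ needed for the augmenting swaps, so these inequalities must be tracked carefully through the case analysis, which is where the bulk of the work lies.
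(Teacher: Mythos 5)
Your reduction ``if there is a set $S$ of $k-1$ vertices of part $1$ with $G-S$ triangle-free, then $e(G)\le (k-1)(n-n_1)+f_3(n_1-(k-1),n_2,\dots,n_r)$'' is correct, but it is not the engine of a proof: the entire difficulty is showing that a $kK_3$-free graph with too many edges must admit such an $S$ (or otherwise yields a contradiction), and the two steps you propose for this are exactly the ones you leave unproved. First, there is no argument that a maximum triangle packing $\mathcal{T}$ can be re-routed so that every member meets part $1$; the augmenting move you describe (splitting one triangle of $\mathcal{T}$ into two) only exploits maximality of $|\mathcal{T}|$ and gives no leverage toward relocating triangles into $V_1$. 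Second, even granting that every $T\in\mathcal{T}$ meets $V_1$ and taking $S$ to be those part-$1$ vertices, maximality of $\mathcal{T}$ only says a triangle of $G-S$ meets $W\setminus S$, and ruling this out is again asserted ``via the augmenting move again'' with no mechanism. Your fallback for the failure case --- induction on $r$ by passing to $K_{n_2,\dots,n_r}$ --- founders on the point you yourself flag: the hypothesis $n_2+4k\le n_3$ is not inherited, and the proposed comparison $\mathrm{ex}(K_{n_2,\dots,n_r},kK_3)+(k-1)n_2$ versus $e(G)$ is not justified (bounding the edges at part $1$ trivially gives only $n_1(n-n_1)$, far too much). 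So the proposal is a program whose hard steps are all deferred, not a proof.

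For contrast, the paper never establishes any structure of the extremal graph. It takes a minimal counterexample in $k$, proves that deleting \emph{any single vertex} still leaves a $(k-1)K_3$ (Claim \ref{abadvertex}, itself a consequence of minimality plus Proposition \ref{f3property}), deletes the three vertices of an arbitrary triangle, and bounds $e(G)$ by the inductive value for $(k-1)K_3$ on the reduced part sizes plus $n-3$ plus the three pairwise codegrees $|N(u_i)\cap N(u_j)|$ (Claim \ref{Claim-oneK3}). If all codegrees are at most $6(k-1)$ the numerics close directly; if some codegree is large, Claim \ref{abadvertex} is used to find a replacement triangle whose relevant codegrees are at most $3(k-1)$, and a short case analysis on whether part $1$ is hit finishes the count. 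This codegree bookkeeping is the idea your proposal is missing, and it is what makes the explicit thresholds $10k-4$ and $n_1+4k\le\min_{i\ge2}n_i$ suffice without any structural or $r$-inductive argument.
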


 
 In Conjecture \ref{conj-z}, it is required that   ``$n_1$ is sufficiently large".
 But how large it can be?
 By aforementioned results and Theorem \ref{mainkk3},
 we conjecture that the  lower bound  for $n_1$ 
 is  $k$.

  \begin{conjecture}\label{conj-z2} 
  	Given three integers $k$, $r$, and $t$ with $ r\ge t\ge 2$ and $k\ge 1$, 	let $n_1,n_2,\dots,n_r$ be  integers with {$n_1\le \min_{i\ge 2} \{n_i\}$} and $\sum_{i=1}^{r}n_i=n$. 
{If~ $n_1\ge k$},  
  	then {ex}$(K_{n_1,n_2,\dots,n_r},kK_t)= (k-1)(n-n_1)+f_t(n_1-(k-1),n_2,\dots,n_r)$.
  \end{conjecture}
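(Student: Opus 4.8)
The plan is to prove the conjecture by first disposing of the boundary parameter ranges and then attacking the genuinely new regime by induction on $k$. When $t=2$ the claimed value is $(k-1)(n-n_1)$ since $f_2\equiv 0$, and this is exactly the theorem of De Silva, Heysse and Young quoted above. When $k=1$ the claimed value is $f_t(n_1,n_2,\dots,n_r)$, which is Theorem~\ref{partitekt}. Moreover the lower bound for all $k\ge 2,\ t\ge 3$ is supplied by Theorem~\ref{kk3lower} (here $n_1\ge k$ guarantees $n_1-(k-1)\ge 1$, so the construction is well defined). Hence it remains to prove the \emph{upper bound} for $k\ge 2$, $t\ge 3$, $r\ge t$, and $n_1\le\min_{i\ge2}n_i$ with $n_1\ge k$: every $kK_t$-free spanning subgraph $G$ of $K_{n_1,\dots,n_r}$ satisfies $e(G)\le (k-1)(n-n_1)+f_t(n_1-(k-1),n_2,\dots,n_r)$.

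I would argue by induction on $k$, the case $k=1$ being Theorem~\ref{partitekt}. Let $G$ be a $kK_t$-free subgraph of maximum size. If $G$ is $K_t$-free, then $e(G)\le f_t(n_1,\dots,n_r)$ by Theorem~\ref{partitekt}, and the bound reduces to the purely arithmetic inequality
\[
f_t(n_1,n_2,\dots,n_r)\ \le\ (k-1)(n-n_1)+f_t(n_1-(k-1),n_2,\dots,n_r).
\]
This follows by an exchange argument on partitions: fixing a partition $\mathcal P=(P_1,\dots,P_{t-1})$ of $[r]$ optimal for the left-hand side and letting $P_a$ be the part containing the index $1$, decreasing $n_1$ by $k-1$ decreases $\sum_{i<j}n_{P_i}n_{P_j}$ by exactly $(k-1)(n-n_{P_a})\le (k-1)(n-n_1)$, since $n_{P_a}\ge n_1$. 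I would record this monotonicity, together with the related convexity properties of $f_t$, as lemmas in Section~\ref{fproperty}, since they are used repeatedly.

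The substantive case is when $G$ contains a copy of $K_t$. Here I would reduce to the parameter $k-1$ by deleting the vertex set $V(Q)$ of a single copy $Q\cong K_t$: the graph $G-V(Q)$ is automatically $(k-1)K_t$-free, the induction hypothesis bounds $e(G-V(Q))$ in terms of the reduced part sizes, and $e(G)=e(G-V(Q))+d(Q)$, where $d(Q)$ is the number of edges of $G$ meeting $V(Q)$. The whole difficulty is then concentrated in choosing $Q$ so that $d(Q)$ does not exceed the gap between the target value for $k$ and the induction bound for $k-1$ on the reduced sizes. Two features guide this choice. First, whenever $r>t$ one should select $Q$ to \emph{avoid the smallest part}, so that $n_1$ is unchanged and the induction stays inside the hypothesis $n_1\ge k$; when $r=t$ this is impossible, and that boundary case is exactly Theorem~\ref{rparkkr}, which I would invoke directly. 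Second, $Q$ should minimise $d(Q)$ among all copies of $K_t$, and one must control the sum $\sum_{v\in V(Q)}\deg_G(v)$ for such a minimum copy. Matching $d(Q)$ against the target gap uses the same exchange identity as above, now for the marginal change of $f_t$ under decrementing $t-1$ part sizes by one; one must also re-sort the reduced part sizes, since a part of size $n_1$ may drop below the new smallest part.

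The main obstacle is that the hypothesis is only $n_1\ge k$ rather than ``$n_1$ sufficiently large,'' so no stability or absorption slack is available and the bound on $d(Q)$ must be tight rather than approximate. Concretely, a minimum-degree copy $Q$ can still meet nearly $n-n_1$ edges through each of its $t$ vertices, and one must prove that edge-maximality forces a well-chosen deleted copy to behave like a single ``core layer'' of the extremal construction: one vertex of $Q$ contributes essentially $n-n_1$ while the remaining $t-1$ vertices contribute only the increment $f_t(n_1,\dots)-f_t(\text{reduced})$. Establishing this rigidity for \emph{all} admissible sizes down to $n_1=k$ — together with showing that the optimal $f_t$-partition does not jump uncontrollably as the parts shrink — is the crux, and I expect it to require a local-switching (compression) argument on $G$ toward the conjectured configuration, combined with a case analysis according to whether the maximising partition groups the small parts together or apart. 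The regime $t=3$, $r\ge4$, $n_1+4k\le\min_{i\ge2}n_i$ of Theorem~\ref{mainkk3} is precisely where this rigidity can be proved cleanly, and pushing it to general $t$ and down to $n_1=k$ is the remaining difficulty.
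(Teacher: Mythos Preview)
The statement you are attempting to prove is labelled \emph{Conjecture}~\ref{conj-z2} in the paper, and the paper does \emph{not} contain a proof of it. The authors prove only the special case $t=3$, $r\ge 4$, $10k-4\le n_1+4k\le\min_{i\ge 2}n_i$ (Theorem~\ref{mainkk3}) and then state Conjecture~\ref{conj-z2} as a proposed sharpening of the ``$n_1$ sufficiently large'' hypothesis down to $n_1\ge k$. So there is no paper proof to compare against; the statement is open.

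Your proposal is, by your own description, a strategy outline rather than a proof: the induction on $k$ by deleting a copy $Q\cong K_t$ is exactly the engine the paper uses in Section~\ref{upper} for Theorem~\ref{mainkk3}, and your identification of the two crucial issues --- avoiding the smallest part when $r>t$, and controlling $d(Q)$ tightly --- matches the shape of the paper's argument. But you correctly flag the genuine gap yourself: with only $n_1\ge k$ there is no slack, and the paper's own argument for $t=3$ already needs $n_1\ge 6k-4$ and $n_2\ge n_1+4k$ precisely to make the $d(Q)$ bound (via Claims~\ref{Claim-oneK3}--\ref{findtri6k} and Propositions~\ref{f3property}--\ref{f3property2}) close. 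Your ``local-switching/compression toward the extremal configuration'' is a reasonable heuristic for removing that slack, but no such argument is supplied, and none is known; your final sentence concedes this. In short, the proposal is an accurate map of the territory and is consistent with the paper's partial progress, but it does not (and does not claim to) prove the conjecture.
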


The rest of this paper is organized as follows. 
In Section \ref{fproperty},   {we first prove the equivalence of Conjectures \ref{conj-kK_t} and \ref{conj-z}. Later, }
some basic
properties of $f_3(n_1,n_2,\dots,n_r)$ are provided, which will be used frequently in the proof of Theorem \ref{mainkk3}. The proofs of Theorems \ref{kk3lower} and    \ref{mainkk3} are presented in Sections \ref{lower} and \ref{upper}, respectively. 

 {
\section{
Equivalence and Properties of $f_3(n_1,n_2,\dots,n_r)$}\label{fproperty}

\begin{proposition}\label{equi}
	Conjecture 1.5  and Conjecture 1.6 are equivalent.
\end{proposition}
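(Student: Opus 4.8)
The plan is to show that, for all admissible $n_1,\dots,n_r$, the right-hand sides of Conjectures \ref{conj-kK_t} and \ref{conj-z} are equal as numbers. Since $\mathrm{ex}(K_{n_1,\dots,n_r},kK_t)$ and the right-hand side of Conjecture \ref{conj-kK_t} are symmetric in $n_1,\dots,n_r$, it suffices to prove this identity under the assumption $n_1\le n_2\le\cdots\le n_r$; then the two conjectures assert literally the same equality, and hence are equivalent.

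The first step is to rewrite the right-hand side of Conjecture \ref{conj-z} in a form parallel to that of Conjecture \ref{conj-kK_t}. Fix a partition $\mathcal{P}$ of $[r]$ into $t-1$ nonempty parts and let $I_0=I_0(\mathcal{P})$ denote the part containing the index $1$. When evaluating $f_t(n_1-(k-1),n_2,\dots,n_r)$ the entry $n_1-(k-1)$ lies in the block corresponding to $I_0$, so the block sums are $n_{I_0}-(k-1)$ and $n_I$ for $I\neq I_0$; expanding the sum of pairwise products of the block sums gives $\sum_{\{I,I'\}\subseteq\mathcal{P}}n_In_{I'}-(k-1)(n-n_{I_0})$. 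Adding $(k-1)(n-n_1)$ and using $m_{I_0}=n_1$ (which holds because $1\in I_0$ and $n_1=\min_i n_i$) yields
$$(k-1)(n-n_1)+f_t(n_1-(k-1),n_2,\dots,n_r)=\max_{\mathcal{P}}\Bigl[\,\sum_{\{I,I'\}\subseteq\mathcal{P}}n_In_{I'}+(k-1)\bigl(n_{I_0(\mathcal{P})}-m_{I_0(\mathcal{P})}\bigr)\,\Bigr],$$
whereas the right-hand side of Conjecture \ref{conj-kK_t} equals $\max_{\mathcal{P}}\bigl[\sum_{\{I,I'\}\subseteq\mathcal{P}}n_In_{I'}+(k-1)\max_{I\in\mathcal{P}}(n_I-m_I)\bigr]$. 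Since $n_{I_0(\mathcal{P})}-m_{I_0(\mathcal{P})}\le\max_{I\in\mathcal{P}}(n_I-m_I)$ for every $\mathcal{P}$, the right-hand side of Conjecture \ref{conj-z} is at most that of Conjecture \ref{conj-kK_t}.

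For the reverse inequality, let $\mathcal{P}^{*}$ attain the maximum in Conjecture \ref{conj-kK_t} and let $I_0$ be a part of $\mathcal{P}^{*}$ with $n_{I_0}-m_{I_0}$ as large as possible, so that the value of $\mathcal{P}^{*}$ is $\sum_{\{I,I'\}\subseteq\mathcal{P}^{*}}n_In_{I'}+(k-1)(n_{I_0}-m_{I_0})$. If $1\in I_0$ there is nothing to do. Otherwise let $I_j$ be the part of $\mathcal{P}^{*}$ containing $1$, pick $w\in I_0$ with $n_w=m_{I_0}$, put $d:=n_w-n_1\ge0$, and build $\mathcal{Q}$ from $\mathcal{P}^{*}$ by replacing $I_0$ with $J:=(I_0\setminus\{w\})\cup\{1\}$ and $I_j$ with $(I_j\setminus\{1\})\cup\{w\}$ (this is again a partition of $[r]$ into $t-1$ nonempty parts). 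A short computation gives $m_J=n_1$ and $n_J-m_J=n_{I_0}-m_{I_0}$, so the term $(k-1)(n_J-m_J)$ attached to the part of $\mathcal{Q}$ containing $1$ equals $(k-1)(n_{I_0}-m_{I_0})$, while the sum of pairwise products changes by exactly $d\,(n_{I_0}-n_w+n_1-n_{I_j})$. The crucial point is that maximality of the defect $n_{I_0}-m_{I_0}$ forces $n_{I_0}-n_w=n_{I_0}-m_{I_0}\ge n_{I_j}-m_{I_j}\ge n_{I_j}-n_1$, that is $n_{I_0}-n_w+n_1-n_{I_j}\ge0$; hence $\mathcal{Q}$ does not decrease the objective and shows that the right-hand side of Conjecture \ref{conj-z} is at least the value of $\mathcal{P}^{*}$, which is the whole right-hand side of Conjecture \ref{conj-kK_t}. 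Combining the two inequalities closes the argument.

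I expect the main obstacle to be identifying the correct local move. The naive attempt of simply relocating the index $1$ into $I_0$ can strictly decrease the pairwise-product sum; it is precisely the swap above, together with the fact that $I_0$ is a maximum-defect part, that makes the change $d\,(n_{I_0}-n_w+n_1-n_{I_j})$ nonnegative. It is worth noting that no largeness hypothesis on the $n_i$ is needed for this algebraic identity, even though both conjectures are stated only for large $n_i$; the size assumption is used only in proving the conjectures themselves, not in their equivalence.
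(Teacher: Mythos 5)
Your proof is correct and follows essentially the same route as the paper's: the easy direction comes from rewriting $(k-1)(n-n_1)+f_t(n_1-(k-1),n_2,\dots,n_r)$ as $\max_{\mathcal{P}}\bigl[\sum_{I\ne I'}n_In_{I'}+(k-1)(n_{I_0}-m_{I_0})\bigr]$ with $I_0$ the part containing $1$, and the reverse direction uses the same exchange of the index $1$ with a minimum-weight element of the maximum-defect part. Your direct ``the swap does not decrease the objective'' formulation is just a slightly cleaner packaging of the paper's argument by contradiction.
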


\begin{proof}
	Let $ r>t\ge 3$,  $k\ge 2$,  $n_1\le n_2\dots\le n_r$. 
	{ Our goal is to prove}
	\begin{align}\label{equi}
		\mathop{max}\limits_{\mathcal{P}}\left\{(k-1)n_{\mathcal{P}}+\sum_{I\ne {I'}\in {\mathcal{P}}}n_I\cdot n_{I'}\right\}=(k-1)(n-n_1)+f_t(n_1-(k-1),n_2,\dots,n_r){ .}
	\end{align}

	{ First we show the $\ge $ direction of   (\ref{equi}).}
	Let $x_1=n_1-(k-1)$, $x_i=n_i$ for any $i\ge 2$, and $\mathcal{P}_0=(P_1, P_2, \dots, P_{t-1})$ be a partition of $[r]$ maximizing $f_t(x_1,x_2,\dots,x_r)$. 
	Assume  $1\in P_1$. 
	Hence 
	\begin{align*}
		&(k-1)(n-n_1)+f_t(n_1-(k-1),n_2,\dots,n_r)\\
		&=(k-1)(n_{P_1}-n_1)+(k-1)(n-n_{P_1})+\sum_{i\ne j\in [t-1]}x_{P_i}\cdot x_{P_j}\\
		&\le (k-1)n_{\mathcal{P}_0}+\sum_{i\ne j\in [t-1]}n_{P_i}\cdot n_{P_j}\\
		&\le \mathop{max}\limits_{\mathcal{P}}\left\{(k-1)n_{\mathcal{P}}+\sum_{I\ne {I'}\in {\mathcal{P}}}n_I\cdot n_{I'}\right\}.
	\end{align*}
	
	Now we prove { the $\le $ direction of   (\ref{equi})}. 
	For any given partition $\mathcal{P}=(P_1, P_2, \dots, P_{t-1})$ of $[r]$, let $\ell\in [t-1]$ such that $n_\mathcal{P}=n_{P_\ell}-m_{P_\ell}$ and $\alpha \in P_\ell$ such that $n_\alpha=m_{P_\ell}$. 
	
	Notice that 
	{ 
		\begin{align*}
		& (k-1)n_{\mathcal{P}}+\sum_{I\ne {I'}\in {\mathcal{P}}}n_I\cdot n_{I'} \\
		&= (k-1)(n_{P_\ell}-m_{P_\ell})+n_{P_\ell}\cdot(n-n_{P_\ell})+
		\sum_{I\ne {I'}\in {\mathcal{P}\setminus\{P_\ell\}}}n_I\cdot n_{I'} \\
		&= (k-1)(n-m_{P_\ell})+(n_{P_\ell}-(k-1))\cdot(n-n_{P_\ell})+
		\sum_{I\ne {I'}\in {\mathcal{P}\setminus\{P_\ell\}}}n_I\cdot n_{I'} \\
		&\le  (k-1)(n-n_{\alpha})+f_t(\{n_1, \cdots, n_r\}\cup \{n_{\alpha}-(k-1)\}\setminus\{n_{\alpha}\}) .
	\end{align*}
Next we assume that $\mathcal{P}$ be the partition maximizing the value $(k-1)n_{\mathcal{P}}+\sum_{I\ne {I'}\in {\mathcal{P}}}n_I\cdot n_{I'} $. 
It remains to show \begin{align}\label{alpha1}
	 &(k-1)(n-n_{\alpha})+f_t(\{n_1, \cdots, n_r\}\cup \{n_{\alpha}-(k-1)\}\setminus\{n_{\alpha}\}) \notag\\&\le  (k-1)(n-n_{1})+f_t(\{n_1, \cdots, n_r\}\cup \{n_{1}-(k-1)\}\setminus\{n_{1}\}).
\end{align}
If $n_{\alpha}=n_1$, then we are done. We may assume that $n_{\alpha}> n_1$.  We can see $1\notin P_\ell$. 
 We may assume $1\in {P}_1$ and $\ell\ne 1$. Now note that $m_{P_1} = n_1$ and
 $m_{P_\ell}=n_\alpha$.  Then $n_{P_\ell }-n_\alpha > n_{P_1}-n_1$ by $\ell\ne 1$. 
 Since $n_{\alpha}> n_1$ and $n_{P_\ell }-n_\alpha > n_{P_1}-n_1$, we can switch $1$ and $\alpha$ in the partition $\mathcal{P}$ and thus increases $(k-1)n_{\mathcal{P}}+\sum_{I\ne {I'}\in {\mathcal{P}}}n_I\cdot n_{I'} $, a contradiction. Thus the Inequation (\ref{alpha1}) holds. 
}
	\hfill
\end{proof}
}

 {Next  we provided} some properties of the function $f_3(n_1,n_2,\dots,n_r)$. 
\begin{proposition}\label{f3property}
	Let $n_1, n_2, \dots, n_r,\mu$ be  positive integers with $\mu\in [r]$ and $n_{\mu}-1\ge n_1+1$. 
 Then $f_3(\{n_1,n_2,\dots,n_r\}\setminus \{n_1,n_\mu\}\cup \{n_1+1,n_\mu-1\})
\le f_3(n_1,n_2,\dots,n_r)+n_\mu-(n_1+1)$. 
\end{proposition}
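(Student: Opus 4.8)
The plan is to use that replacing $n_1,n_\mu$ by $n_1+1,n_\mu-1$ does not change the total $n:=\sum_i n_i$, so that $f_3$ of either multiset equals $\max\{x_{Q_1}x_{Q_2}\}=\max\{x_{Q_1}(n-x_{Q_1})\}$, the maximum being over partitions $(Q_1,Q_2)$ of $[r]$ into two nonempty parts, and the quadratic $g(s)=s(n-s)$ is concave with peak at $s=n/2$. Set $m_i=n_i$ for $i\notin\{1,\mu\}$, $m_1=n_1+1$, $m_\mu=n_\mu-1$; write $n_Q=\sum_{i\in Q}n_i$ and $m_Q=\sum_{i\in Q}m_i$ for $Q\subseteq[r]$; and put $\delta:=n_\mu-(n_1+1)\ge 1$, the hypothesis $n_\mu-1\ge n_1+1$ being used exactly here. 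The left-hand side of the claimed inequality is $f_3(\{m_1,\dots,m_r\})$. Fix a partition $\mathcal P=(P_1,P_2)$ of $[r]$ into two nonempty parts with $f_3(\{m_1,\dots,m_r\})=m_{P_1}m_{P_2}$, with the two labels chosen so that $s:=m_{P_1}\ge n/2\ge m_{P_2}=n-s$; thus $f_3(\{m_1,\dots,m_r\})=s(n-s)$. It suffices to produce one partition $(Q_1,Q_2)$ of $[r]$ into two nonempty parts with $n_{Q_1}n_{Q_2}\ge s(n-s)-\delta$, since then $f_3(\{n_1,\dots,n_r\})\ge n_{Q_1}n_{Q_2}\ge s(n-s)-\delta=f_3(\{m_1,\dots,m_r\})-\delta$.

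First I would simply try $(Q_1,Q_2)=\mathcal P$ and see when this already works. Since the $m$- and $n$-values differ only at the indices $1$ and $\mu$ — passing from $m$ to $n$ lowers the value at index $1$ by $1$ and raises the one at index $\mu$ by $1$ — we get $n_{P_1}=s+\varepsilon$, where $\varepsilon=-1$ if $P_1$ contains index $1$ but not $\mu$, $\varepsilon=+1$ if $P_1$ contains $\mu$ but not $1$, and $\varepsilon=0$ if it contains both or neither; expanding, $n_{P_1}n_{P_2}=(s+\varepsilon)(n-s-\varepsilon)=s(n-s)+\varepsilon(n-2s-\varepsilon)$. If $\varepsilon=0$ this is $s(n-s)$; if $\varepsilon=-1$ it is $s(n-s)+(2s-n-1)\ge s(n-s)-1$ since $s\ge n/2$; in both cases we are done because $\delta\ge 1$. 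If $\varepsilon=+1$ it equals $s(n-s)+(n-2s-1)$, which is $\ge s(n-s)-\delta$ exactly when $2s\le n+\delta-1$, so in that sub-range $\mathcal P$ still works.

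The only remaining case is $\mu\in P_1$, $1\in P_2$, with $2s\ge n+\delta-1$, and here I would repair $\mathcal P$ by swapping the two distinguished indices: let $Q_1=(P_1\setminus\{\mu\})\cup\{1\}$ and $Q_2=(P_2\setminus\{1\})\cup\{\mu\}$, which is again a partition of $[r]$ into two nonempty parts since $1\in Q_1$ and $\mu\in Q_2$. Because $P_1\setminus\{\mu\}$ contains neither $1$ nor $\mu$, its $n$-sum equals its $m$-sum $s-m_\mu$, so $n_{Q_1}=(s-m_\mu)+n_1=s-(n_\mu-1)+n_1=s-\delta$; hence $n_{Q_1}n_{Q_2}=(s-\delta)(n-s+\delta)=s(n-s)+\delta(2s-n-\delta)\ge s(n-s)-\delta$, the last step using $2s-n-\delta\ge -1$ and $\delta>0$. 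Combining the cases gives $f_3(\{n_1,\dots,n_r\})\ge s(n-s)-\delta$, which is the proposition.

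The one genuinely delicate point is this last case together with the location of its threshold: simply reusing the optimal partition of the perturbed multiset can be off by as much as about $n/2$ when $\mu$ lands in the heavier class, so one has to notice that trading $\mu$ out of $P_1$ in favour of the index $1$ pulls that class's sum down by precisely $\delta$, and that $2s=n+\delta-1$ is exactly where the ``keep $\mathcal P$'' repair hands over to the ``swap'' repair. Everything else is bookkeeping — checking that the two modified index sets are honest partitions of $[r]$ into nonempty parts, and expanding the products $(s+\varepsilon)(n-s-\varepsilon)$ and $(s-\delta)(n-s+\delta)$ about $s(n-s)$.
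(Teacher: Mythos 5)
Your proof is correct, and it follows essentially the same strategy as the paper's: take a partition attaining the left-hand side, do a case analysis on where the indices $1$ and $\mu$ sit relative to the two parts, and in the problematic configuration swap $1$ and $\mu$ between the parts. The only substantive difference is in that hard subcase, where the paper uses optimality of the chosen partition to force $n_1+1=n_\mu-1$ before swapping, whereas you handle general $\delta=n_\mu-(n_1+1)$ directly via the threshold $2s\ge n+\delta-1$; both arguments go through.
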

\begin{proof}
	Let $x_1=n_1+1$, $x_\mu=n_\mu-1$, and $x_j=n_j$ for any $j\in [r]\setminus \{1,\mu\}$. Then 
	$f_3(\{n_1,n_2,\dots,n_r\}\setminus \{n_1,n_\mu\}\cup \{n_1+1,n_\mu-1\})=f_3(x_1,x_2,\dots,x_r)$. 
	Let  $\mathcal{P}=(P_1,P_2)$  be   the partition of $[r]$ maximizing  $f_3(x_1,x_2,...,x_r)$. 
{Note that} $\sum_{j=1}^rx_j=\sum_{j=1}^rn_j$. 
	 If $\{1,\mu\}\subseteq P_1$ or $\{1,\mu\}\subseteq P_2$, then $x_{P_1}=n_{P_1}$ and $x_{P_2}=n_{P_2}$. Thus, $f_3(x_1,x_2,\dots,x_r)=x_{P_1}\cdot x_{P_2}=n_{P_1}\cdot n_{P_2}\le f_3(n_1,n_2,\dots,n_r)$. 
	 {Hence, 
without} loss of generality,	we may assume that $1\in P_1$ and $\mu\in P_2$. 
	 Observe that $x_{P_1}=n_{P_1}+1$, $x_{P_2}=n_{P_2}-1$, $x_{P_1\setminus\{1\}}=n_{P_1\setminus\{1\}}$, and $x_{P_2\setminus\{\mu\}}=n_{P_2\setminus\{\mu\}}$.  If $x_{P_1\setminus\{1\}}\ge x_{P_2\setminus\{\mu\}}$, then $n_{P_1\setminus\{1\}}\ge n_{P_2\setminus\{\mu\}}$, and we have
	 \begin{align*}
	 	f_3(x_1,x_2,\dots,x_r)=&x_{P_1}\cdot x_{P_2}\\
	 	=&(n_{P_1}+1)\cdot(n_{P_2}-1)\\
	 	=&n_{P_1}\cdot n_{P_2}-n_{P_1}+ n_{P_2}-1\\
	 	\le &f_3(n_1,n_2,\dots,n_r)-(n_{P_1\setminus\{1\}}+n_1)+(n_{P_2\setminus\{\mu\}}+n_{\mu})-1\\
	 	\le &f_3(n_1,n_2,\dots,n_r)+n_{\mu}-(n_1+1). 
	 \end{align*}

	 {Now} $x_{P_1\setminus\{1\}}< x_{P_2\setminus\{\mu\}}${,  equivalently}, $n_{P_1\setminus\{1\}}< n_{P_2\setminus\{\mu\}}$. Recall that $x_1=n_1+1\leq n_\mu -1=x_\mu$. If $n_1+1 < n_\mu -1$, then  $|(x_{P_1\setminus\{1\}}+x_\mu)-( x_{P_2\setminus\{\mu\}}+x_1)|<|(x_{P_1\setminus\{1\}}+x_1)-( x_{P_2\setminus\{\mu\}}+x_\mu)|=|x_{P_1}-x_{P_2}|$, which contradicts the choice of $\mathcal{P}$ that minimizes $|x_{P_1}-x_{P_2}|$, i.e., maximizes $x_{P_1}\cdot x_{P_2}$.
	 Thus, we obtain $n_1+1 = n_\mu -1$. This implies $x_{P_1}\cdot x_{P_2}=(x_{P_1\setminus\{1\}}+x_\mu)\cdot (x_{P_2\setminus\{\mu\}}+x_1)$. Then,
	 \begin{align*}
	 	f_3(x_1,x_2,\dots,x_r)=&x_{P_1}\cdot x_{P_2}\\
	 	=&(x_{P_1\setminus\{1\}}+x_\mu)\cdot (x_{P_2\setminus\{\mu\}}+x_1)\\
	 	=&(n_{P_1\setminus\{1\}}+n_\mu-1)\cdot (n_{P_2\setminus\{\mu\}}+n_1+1)\\
	 	=&(n_{P_1\setminus\{1\}}+n_\mu)\cdot (n_{P_2\setminus\{\mu\}}+n_1)+(n_{P_1\setminus\{1\}}+n_\mu)-(n_{P_2\setminus\{\mu\}}+n_1) -1\\
	 	< &f_3(n_1,n_2,\dots,n_r)+n_{\mu}-(n_1+1). 
	 \end{align*}
	 \hfill
\end{proof}

\begin{proposition}\label{f3property2}
	
	Let $n_1,n_2,\dots,n_r$ be $r$ integers with $n_1+2\le \min_{i\ge 2} \{n_i\}$.  For two indices $i,j\in [r]$ with $i\ne j$, let $\mathcal{P}=(P_1,P_2)$ be the partition  { attaining} $f_3(\{n_1,\dots,n_r\}\setminus\{n_i,n_j\}\cup \{n_i-1,n_j-1\})$.  Then 
	{ \begin{align*}
			f_3(\{n_1,\dots,n_r\}\setminus\{n_i,n_j\}\cup \{n_i-1,n_j-1\})
			\le& f_3(n_1,n_2,\dots,n_r)-\sum_{m=1}^{r}n_m\notag\\
			&+max\{n_1 +2,n_i -n_1+1\}
		\end{align*}}

%
\end{proposition}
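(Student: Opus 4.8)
The plan is to estimate $f_3$ of the modified multiset by substituting cleverly chosen bipartitions of $[r]$ into two elementary facts: (i) $f_3(n_1,\dots,n_r)\ge n_A\cdot(n-n_A)$ for every nonempty proper $A\subseteq[r]$, where $n:=\sum_m n_m$; and (ii) if $\mathcal{P}=(P_1,P_2)$ realizes $f_3$ of the modified multiset, then $\mathcal{P}$ maximizes the product over \emph{all} bipartitions of $[r]$, so that no exchange of elements between $P_1$ and $P_2$ can increase the product. I would write $x_m:=n_m$ for $m\notin\{i,j\}$, $x_i:=n_i-1$, $x_j:=n_j-1$, so that $f_3$ of the modified multiset equals $x_{P_1}x_{P_2}$ and $\sum_m x_m=n-2$; since this quantity is symmetric in $i,j$ and $\max\{n_1+2,n_j-n_1+1\}\le\max\{n_1+2,n_i-n_1+1\}$ whenever $n_j\le n_i$, I may assume $n_i\le n_j$. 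Put $M:=\max\{n_1+2,\,n_i-n_1+1\}$; all values are assumed positive (indeed $n_1\ge 2$ in the intended applications), so $M\ge 3$.

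First I would dispose of the case where $i$ and $j$ lie in different parts of $\mathcal{P}$, say $i\in P_1$ and $j\in P_2$: then $x_{P_1}x_{P_2}=(n_{P_1}-1)(n_{P_2}-1)=n_{P_1}n_{P_2}-n+1\le f_3(n_1,\dots,n_r)-n+1\le f_3(n_1,\dots,n_r)-n+M$ by (i). The main case is when $i$ and $j$ lie in a common part $Q_1$ of $\mathcal{P}$, the other part being $Q_2$; then $x_{Q_1}x_{Q_2}=(n_{Q_1}-2)n_{Q_2}$, and with $d:=n_{Q_1}-n_{Q_2}$, fact (i) applied to $(Q_1,Q_2)$ gives $(n_{Q_1}-2)n_{Q_2}=n_{Q_1}n_{Q_2}-2n_{Q_2}\le f_3(n_1,\dots,n_r)-(n-d)$, so the whole matter reduces to showing $d\le M$. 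Since $i\ne j$, $|Q_1|\ge 2$, so each exchange below produces a valid bipartition to which (ii) applies. If $1\in Q_1$, I would move $1$ from $Q_1$ to $Q_2$: from $(x_{Q_1}-x_1)(x_{Q_2}+x_1)\le x_{Q_1}x_{Q_2}$ and $x_1>0$ one reads off $x_{Q_1}-x_{Q_2}\le x_1\le n_1$, i.e.\ $d\le n_1+2\le M$. If $1\in Q_2$, then $i\ne 1$, so $n_i\ge n_1+2$ by the hypothesis $n_1+2\le\min_{m\ge 2}n_m$; I would then move $i$ from $Q_1$ to $Q_2$ and $1$ from $Q_2$ to $Q_1$ at the same time, which replaces $(x_{Q_1},x_{Q_2})$ by $(x_{Q_1}-(n_i-n_1-1),\,x_{Q_2}+(n_i-n_1-1))$: the analogous product inequality together with $n_i-n_1-1\ge 1$ gives $x_{Q_1}-x_{Q_2}\le n_i-n_1-1$, i.e.\ $d\le n_i-n_1+1\le M$. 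In either case $d\le M$, which finishes the proof.

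The hard part is the subcase $1\in Q_2$ of the main case: the crude bipartition $(Q_1,Q_2)$ alone is too lossy, and a single swap of $i$ out of $Q_1$ only yields $d\le n_i+1$, which need not be $\le M$; one must instead simultaneously push the \emph{smaller} decremented index $i$ out of the heavy part and pull the minimum index $1$ into it, and then extract the sharp bound $d\le n_i-n_1+1$ from the maximality of $\mathcal{P}$. This subcase is also where the hypothesis $n_1+2\le\min_{m\ge 2}n_m$ is indispensable: it is exactly what guarantees $n_i-n_1-1\ge 1$, so that the exchange genuinely constrains $d$.
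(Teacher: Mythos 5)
Your proof is correct and takes essentially the same route as the paper's: the same case split on whether $i,j$ lie in a common part and, within that, on whether the index $1$ lies in the part containing $i,j$, with the same exchange arguments (moving $1$ across, or swapping $1$ with $i$) extracted from the maximality of $\mathcal{P}$. The only difference is cosmetic --- you bound the gap $d=n_{Q_1}-n_{Q_2}$ while the paper equivalently bounds $n_{Q_2}$ from below.
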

\begin{proof}
	Let  $x_i=n_i-1$, $x_j=n_j-1$, and $x_\ell=n_\ell$ for any $\ell\in [r]\setminus\{i,j\}$. Then we know $f_3(\{n_1,\dots,n_r\}\setminus\{n_i,n_j\}\cup \{n_i-1,n_j-1\})=f_3(x_1,x_2,\dots,x_r)$ and the partition $\mathcal{P}$ { attains} $f_3(x_1,x_2,\dots,x_r)$.  Moreover, for any $\ell\in [r]\setminus\{1\}$, we have $x_\ell\ge n_\ell-1> n_1\ge x_1$ since $n_\ell\ge n_1+2$. 
	
	{If} $\{i,j\}\nsubseteq P_\zeta$ for any $\zeta\in [2]$, we have $x_{P_\zeta}=n_{P_\zeta}-1$ and $x_{P_{3-\zeta}}=n_{P_{3-\zeta}}-1$. Thus 
	\begin{align*}
		f_3(x_1, x_2, \dots,x_r)=&x_{P_{\zeta}}\cdot x_{P_{3-\zeta}}=(n_{P_{\zeta}}-1)\cdot (n_{P_{3-\zeta}}-1)\\
		=&n_{P_{\zeta}}\cdot n_{P_{3-\zeta}}-n_{P_{\zeta}}-n_{P_{3-\zeta}}+1\\
		\le& f_3(n_1,n_2,\dots, n_r)+1-\sum_{m=1}^rn_m. 
	\end{align*}
{	
	Assume}  $\{i,j\}\subseteq P_\zeta$ for some $\zeta\in [2]$. {Notice that} $x_{P_{\zeta}}=n_{P_{\zeta}}-2$ and $x_{P_{3-\zeta}}=n_{P_{3-\zeta}}$. Thus,   
	\begin{align}\label{ineq-2n}
		f_3(x_1, x_2, \dots,x_r)=&x_{P_{\zeta}}\cdot x_{P_{3-\zeta}}=(n_{P_{\zeta}}-2)\cdot n_{P_{3-\zeta}}\notag\\
		=&n_{P_{\zeta}}\cdot n_{P_{3-\zeta}}-2n_{P_{3-\zeta}}\notag\\
		\le& f_3(n_1,n_2,\dots, n_r)-2n_{P_{3-\zeta}}. 
	\end{align}
	
	{If}  $1\in {P_{\zeta}}$, we have $x_{P_{3-\zeta}}\ge x_{P_{\zeta}}-x_1$. Otherwise $ x_{P_{\zeta}}\cdot x_{P_{3-\zeta}}<x_{P_{\zeta}}\cdot x_{P_{3-\zeta}}+x_1\cdot(x_{P_{\zeta}}-x_1-x_{P_{3-\zeta}})=(x_{P_{\zeta}}-x_1)\cdot(x_{P_{3-\zeta}}+x_1)$, which  contradicts to the choice of $\mathcal{P}$. With $x_{P_{\zeta}}+x_{P_{3-\zeta}}=x_{\mathcal{P}}$, we know
	$$n_{P_{3-\zeta}}=x_{P_{3-\zeta}}\ge\frac{x_\mathcal{P}-x_1}{2}\ge \frac{\sum_{m=1}^rn_m-2-n_1}{2}.$$ 
	Thus, by the inequality (\ref{ineq-2n}),
	\begin{align*}
		f_3(x_1, x_2, \dots,x_r) \le& f_3(n_1,n_2,\dots, n_r)-2n_{P_{3-\zeta}}\\
		\le& f_3(n_1,n_2,\dots, n_r)+n_1+2-\sum_{m=1}^rn_m. 
	\end{align*}

	{It remains to consider the case $1\in {P_{3-\zeta}}$. 
 Recall $x_i>x_1$ and $i\in {P_{\zeta}}$}, we have  $x_{P_{3-\zeta}}-x_1\ge x_{P_{\zeta}}-x_i$. Otherwise 
	$ x_{P_{\zeta}}\cdot x_{P_{3-\zeta}}<x_{P_{\zeta}}\cdot x_{P_{3-\zeta}}+(x_i-x_1)\cdot(x_{P_{\zeta}}-x_i-x_{P_{3-\zeta}}+x_1)=(x_{P_{\zeta}}-x_i+x_1)\cdot(x_{P_{3-\zeta}}-x_1+x_i)$,  a contradiction to the choice of $\mathcal{P}$ again. 
	With $x_{P_{\zeta}}+x_{P_{3-\zeta}}=x_{\mathcal{P}}$, we know $$n_{P_{3-\zeta}}=x_{P_{3-\zeta}}\ge\frac{x_\mathcal{P}+x_1-x_i}{2}= \frac{\sum_{m=1}^rn_m-1+n_1-n_i}{2}.$$
	Thus, by the inequality (\ref{ineq-2n}),
	 \begin{align*}
		f_3(x_1, x_2, \dots,x_r)
		\le& f_3(n_1,n_2,\dots, n_r)-2n_{P_{3-\zeta}}\\
		\le &f_3(n_1,n_2,\dots, n_r)-n_1+n_i+1-\sum_{m=1}^rn_m.
	\end{align*}
\hfill
\end{proof}

%
%
%

\section{Proof of Theorem \ref{kk3lower}}\label{lower}


We shall provide a  construction to complete the proof.
Let $x_1=n_1-(k-1)$ and $x_i=n_i$ for any { $i\ne 1$}.
{Then} $f_t(n_1-(k-1),n_2,\dots,n_r)=f_t(x_1,x_2,\dots,x_r)$.
Let $\mathcal{P}=(P_1,P_2,\ldots,P_{t-1})$ be the partition of $[r]$ that { attains} $f_t(x_1,x_2,\dots,x_r)$.
Assume that $V_0$ is a set of   $(k-1)$ vertices and $V_i$ is a set of $x_i$ vertices for any $i\in [r]$ {such that $V_0$, $V_1$, \dots, $V_r$ are pairwise disjoint}. According to the partition $\mathcal{P}$ of $[r]$, we set  $V_{P_\ell}=\cup_{i\in P_\ell}V_i$ for each $\ell\in [t-1]$. So $V_{[r]}=\cup_{\ell\in [t-1]}V_{P_\ell}$ and $|V_{P_\ell}|=x_{P_\ell}$. 
Now we construct the graph $G$ as follows. 
Let $V(G)=V_0\cup V_{[r]}$ and 
$$E(G)=\{V_0\vee(\cup_{i=2}^{r}V_i)\}\cup \{\cup_{1\le \ell <j\le t-1}(V_{P_\ell}\vee V_{P_j})\}.$$ 
We see $|V(G)|=(k-1)+\sum_{i=1}^rx_i=\sum_{i=1}^rn_i=n$,  and 
\begin{align*}
|E(G)|=&(k-1)(n-n_1)+\sum_{1\le \ell <j\le t-1}x_{P_\ell}\cdot x_{P_j}\\
=&(k-1)(n-n_1)+
f_t(n_1-(k-1),n_2,\dots,n_r). \end{align*}
{We can see} $G$ is a subgraph of $K_{n_1,n_2,\dots,n_r}$ and $G-V_0$ is a complete $(t-1)$-partite graph. So any copy of $K_t$ in $G$   must contain at least one vertex of $V_0$. Since $|V_0|=k-1$, $G$ contains no copy of  $kK_t$. Thus {ex}$(K_{n_1,n_2,\dots,n_r},kK_t)\ge (k-1)(n-n_1)+f_t(n_1-(k-1),n_2,\dots,n_r)$.
This completes the proof of Theorem \ref{kk3lower}.
\hfill
\ENDproof
\section{Proof of Theorem \ref{mainkk3}}\label{upper}

Recall that, in this section, $r,n_1,n_2,\dots,n_r$ {are  integers} with   {$r\ge 4$, $10k-4\le n_1+4k\le  \min_{i\ge 2} \{n_i\}$},  and $\sum_{i=1}^{r}n_i=n$.
By Theorem \ref{kk3lower}, it suffices to prove  
\begin{align}\label{ineq-less}
ex(K_{n_1,n_2,\dots,n_r},kK_3)\le (k-1)(n-n_1)+
f_3(n_1-(k-1),n_2,\dots,n_r).
\end{align}

By way of contradiction,    we suppose that $k$ is the minimum positive integer such that  {ex}$(K_{n_1,n_2,\dots,n_r},kK_3)> (k-1)(n-n_1)+
f_3(n_1-(k-1),n_2,\dots,n_r)$.  Thus, there exists a graph $G$, as a spanning {subgraph} of $K_{n_1,n_2,\dots,n_r}$, containing no   copy of  $kK_3$ and satisfying
\begin{align}\label{ineq-G-edge} e(G)=ex(K_{n_1,n_2,\dots,n_r},kK_3)>(k-1)(n-n_1)+
f_3(n_1-(k-1),n_2,\dots,n_r). 
\end{align}
Moreover, we use $V_1,V_2,\dots,V_r$ to denote the  parts of $G$ , where $|V_i|=n_i$  for each $i\in [r]$. It is worth noting that the inequality (\ref{ineq-less}) holds for $k=1$ by Theorem \ref{partitekt}. Hence, we may assume  $k\ge 2$.

	\begin{claim}\label{abadvertex}
		$G-\{v\}$ contains at least one copy of $(k-1)K_3$
		for each $v\in V(G)$.  
	\end{claim}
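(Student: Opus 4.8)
The plan is to use the minimality of $k$ together with the lower-bound construction of Theorem~\ref{kk3lower}. Suppose for contradiction that there is a vertex $v\in V(G)$ such that $G-\{v\}$ contains no copy of $(k-1)K_3$. First I would observe that $G$ itself, being $kK_3$-free, certainly cannot contain $(k-1)K_3$ disjoint from $v$, so $v$ lies in every maximum family of disjoint triangles; the real content is that \emph{after deleting $v$} one cannot even find $k-1$ disjoint triangles. Now consider $G':=G-\{v\}$. It is a spanning subgraph of a complete $r$-partite graph whose part sizes are $n_1,\dots,n_r$ with exactly one part reduced by one; since $n_1+4k\le\min_{i\ge2}\{n_i\}$ and $10k-4\le n_1+4k$, the reduced sequence still satisfies the hypotheses of the theorem for the parameter $k-1$ (one needs to check the two inequalities survive decreasing a single $n_i$ by $1$ — this is where the slack ``$+4k$'' versus ``$+4(k-1)$'' and ``$10k-4$'' is used, and it is routine).

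Next, by the minimality of $k$, inequality~(\ref{ineq-less}) holds for $k-1$ on the part sizes of $G'$; that is,
\begin{align*}
e(G') \le \operatorname{ex}(G',(k-1)K_3) \le (k-2)(n'-n_1')+f_3\big(n_1'-(k-2),n_2',\dots,n_r'\big),
\end{align*}
where $n'=n-1$ and $(n_1',\dots,n_r')$ is the reduced part-size multiset (here using that $G'$ is $(k-1)K_3$-free). On the other hand $e(G)\le e(G')+\deg_G(v)\le e(G')+(n-n_1)$ if $v$ lies in the part $V_1$ of smallest size, and more generally $\deg_G(v)\le n-n_i\le n-n_1$ whatever part $v$ is in, since every part has size at least $n_1$. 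Combining these two bounds gives an upper estimate for $e(G)$, and the goal is to show this estimate is at most $(k-1)(n-n_1)+f_3(n_1-(k-1),n_2,\dots,n_r)$, contradicting~(\ref{ineq-G-edge}). Reducing this to a statement purely about $f_3$, one must verify
\begin{align*}
(k-2)(n'-n_1')+f_3\big(n_1'-(k-2),\dots\big)+(n-n_1)\le (k-1)(n-n_1)+f_3\big(n_1-(k-1),n_2,\dots,n_r\big),
\end{align*}
which, after the $(k-2)$ versus $(k-1)$ bookkeeping, amounts to a monotonicity/shift inequality for $f_3$ of exactly the flavour of Propositions~\ref{f3property} and~\ref{f3property2}: decreasing one coordinate by one (when $v\in V_i$, $i\ge2$) or shifting mass between $x_1$ and the rest (when $v\in V_1$) changes $f_3$ by a controlled amount.

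The main obstacle I anticipate is the case analysis on which part $v$ belongs to, and correspondingly which entry of the multiset $(n_1-(k-1),n_2,\dots,n_r)$ changes when we pass from the $k$-parameter problem to the $(k-1)$-parameter problem on $G-\{v\}$. If $v\in V_i$ with $i\ge 2$, the part sizes of $G'$ are $n_1\le n_2\le\cdots\le n_i-1\le\cdots$, the ``$k-1$'' shift is applied to $n_1$, and one needs $f_3(n_1-(k-2),n_2,\dots,n_i-1,\dots)\le f_3(n_1-(k-1),n_2,\dots,n_r)+\text{something}\le$, i.e. a combination of ``add $1$ to the first coordinate'' (the opposite direction of Proposition~\ref{f3property}) and ``subtract $1$ from coordinate $i$''. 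If instead $v\in V_1$, the smallest part of $G'$ may be $n_1-1$ (if $n_1-1\ge$ nothing smaller) and the shift moves to $n_1-1-(k-2)=n_1-(k-1)$, so the first coordinate is literally unchanged and only $\deg_G(v)\le n-n_1$ must be absorbed — this subcase should be the cleanest. I would organise the proof by first disposing of $v\in V_1$, then handling $v\in V_i$ for $i\ge2$ via Proposition~\ref{f3property} (applied in reverse, i.e. as $f_3(n_1,\dots)\ge f_3(n_1+1,\dots,n_\mu-1,\dots)-(n_\mu-n_1-1)$) combined with the trivial bound that lowering one coordinate lowers $f_3$ by at most the total sum; the numerical hypothesis $10k-4\le n_1+4k$ is what guarantees all the shifted arguments stay in the positive-integer regime where these propositions apply.
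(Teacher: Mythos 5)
Your plan follows the paper's proof essentially verbatim: delete $v$, invoke the minimality of $k$ on the reduced part sizes, add back $d_G(v)$, and absorb the resulting shift in $f_3$ via Proposition~\ref{f3property}, splitting into the cases $v\in V_1$ and $v\in V_i$ with $i\ge 2$. One caution for the execution: in your displayed target inequality you use the degree bound $n-n_1$, but for $v\in V_i$ with $i\ge 2$ you must keep the sharper bound $d_G(v)\le n-n_i$, since the error term $n_i-(n_1-k+2)$ produced by Proposition~\ref{f3property} is cancelled exactly by the $-n_i$ in that bound, whereas the weaker bound $n-n_1$ leaves an uncancelled surplus of $n_i-n_1>0$ and the inequality as displayed would fail.
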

\begin{proof}
	By contradiction, suppose that there is a  vertex  $v$ in $V(G)$ such that $G-\{v\}$ contains no copy of $(k-1)K_3$.
{Suppose} that $v\in V_\ell$ for some $\ell\in[r]$. Let 
 $n_\ell'=n_\ell-1$, $n_j'=n_j$ for   any { $j\ne \ell$},
 and {$n'=n-1$}.  Then {
 $10(k-1)-4\le n'_1+4(k-1)\le  \min_{i\ge 2} \{n'_i\}$.} 
 Moreover, $G-\{v\}\subseteq K_{n'_1,n'_2,\dots,n'_r}$ by the construction.
 Because $G-\{v\}$ contains no copy of $(k-1)K_3$,  we obtain
 $e(G-\{v\})\le ex(K_{n'_1,n'_2,\dots,n'_r},(k-1)K_3)$.
  By the minimality of $k$, we have 
  \begin{align}\label{badvep}
 e(G)=&\notag e(G-\{v\})+d_G(v)\\\notag
 \le& ex(K_{n'_1,n'_2,\dots,n'_r},(k-1)K_3)+d_G(v)\\
 \le&(k-2)(n'-n_1')+f_3(n_1'-(k-2), n_2',n_3',\dots,n_r')+d_G(v).
 \end{align}
 If $v\in V_1$, then $n_1'=n_1-1$ and $d_G(v)=|N_G(v)|\le |V(G)\setminus V_1|=n-n_1$ since $G\subseteq K_{n_1,n_2,\dots,n_r}$. By the inequality (\ref{badvep}), we have 
 \begin{align*}
 e(G)\le& (k-2)(n-1-n_1+1)+f_3(n_1-1-(k-2), n_2,n_3,\dots,n_r)+n-n_1\\
 =&(k-1)(n-n_1)+f_3(n_1-(k-1),n_2,n_3,\dots,n_r),
\end{align*}
which contradicts the inequality (\ref{ineq-G-edge}). Hence,  we may assume  $v\notin V_1$. Then $d_G(v)\le n-n_{\ell}$,  $n_{\ell}'=n_{\ell}-1$, and $n_i'=n_i$ for any { $i\ne \ell$}. 
Note that $n_{\ell}-1\ge n_1+4k-1\ge n_1-k+2$. 
By the inequality (\ref{badvep}) and Proposition \ref{f3property}, we have \begin{align*}
e(G)\le& (k-2)(n-1-n_1)+f_3(\{n_1-(k-2), n_2,n_3,\dots,n_r\}\setminus\{n_{\ell}\}\cup \{n_{\ell}-1\})+n-n_{\ell}\\
\le&(k-2)(n-1-n_1)+(f_3(n_1-(k-1),n_2,n_3,\dots,n_r)+n_{\ell}-(n_1-k+2))+n-n_{\ell}\\
{=} &(k-1)(n-n_1)-k+2+n_1+f_3(n_1-(k-1),n_2,n_3,\dots,n_r)-(n_1-k+2)\\
=&(k-1)(n-n_1)+f_3(n_1-(k-1),n_2,n_3,\dots,n_r),
\end{align*}
which is a contradiction to the inequality (\ref{ineq-G-edge}).
This completes the proof of Claim \ref{abadvertex}. 
 \end{proof}

Since $k\geq2$, the graph $G$ contains at least one triangle by  Claim \ref{abadvertex}. Arbitrarily  choose a triangle $K^*_3$ in $G$ { with vertices $u_1,u_2,u_3$.}
Assume that 
$u_1\in V_\alpha$, $u_2\in V_\eta$, and $u_3\in V_\xi$, where $\alpha,\eta,\xi$ are three distinct integers in $ [r]$. Moreover, let $S=\{\alpha,\eta,\xi\}$, $n'_\ell=n_\ell-1$ for any $\ell\in S$, $n_j'=n_j$ for any $j\in [r]\setminus S$, and $n'=\sum_{i=1}^{r}n'_i$.  {Notice that} $n'=\sum_{i=1}^{r}n_i-3=n-3$ and $G-\{u_1,u_2,u_3\}\subseteq K_{n_1', n_2',\dots,n_r'}$.

%

\begin{claim} \label{Claim-oneK3} $e(G)\le(k-2)(n'-n'_1)+f_3(n'_1-(k-2),n_2',\dots,n_r')+\sum_{1\le i<j\le 3}|N_{G}(u_i)\cap N_{G}(u_j)|+n-3$.
\end{claim}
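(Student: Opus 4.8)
The plan is to remove the three vertices of the chosen triangle $K^*_3$ from $G$, apply the minimality of $k$ to what remains, and then recover the deleted edges through a short inclusion--exclusion argument.

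First I would set $H:=G-\{u_1,u_2,u_3\}$. Since $G$ contains no $kK_3$ and $K^*_3$ is vertex-disjoint from $H$, the graph $H$ contains no copy of $(k-1)K_3$, so $e(H)\le ex(K_{n_1',n_2',\dots,n_r'},(k-1)K_3)$. To turn this into the desired bound via the minimality of $k$ (or via Theorem~\ref{partitekt} when $k=2$), I must check that $n_1',\dots,n_r'$ again satisfy the hypotheses of Theorem~\ref{mainkk3} with $k$ replaced by $k-1$. Because $u_1,u_2,u_3$ lie in three distinct parts and $n_i\ge n_1+4k$ for every $i\ge 2$, we get $n_\ell'\ge n_\ell-1\ge n_1+4k-1>n_1\ge n_1'$ for each $\ell\ge 2$, so $V_1$ is still a smallest part; moreover $n_1'+4(k-1)\le n_1+4k-4<n_1+4k-1\le\min_{i\ge 2}n_i'$ and $n_1'+4(k-1)\ge n_1+4k-5\ge 10(k-1)-4$ (the last inequality from $n_1\ge 6k-4$). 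Hence
\begin{align*}
e(H)\le (k-2)(n'-n_1')+f_3(n_1'-(k-2),n_2',\dots,n_r').
\end{align*}

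Next I would account for the edges of $G$ meeting $\{u_1,u_2,u_3\}$. The triangle $K^*_3$ contributes $3$ edges inside $\{u_1,u_2,u_3\}$, and each $u_i$ sends exactly $d_G(u_i)-2$ edges to $V(H)$, since it is adjacent to the other two vertices of $K^*_3$; therefore $e(G)=e(H)+\sum_{i=1}^{3} d_G(u_i)-3$. To bound $\sum_{i=1}^{3} d_G(u_i)=\sum_{i=1}^{3}|N_G(u_i)|$, apply inclusion--exclusion:
\begin{align*}
\sum_{i=1}^{3}|N_G(u_i)|=\Bigl|\bigcup_{i=1}^{3} N_G(u_i)\Bigr|+\sum_{1\le i<j\le 3}|N_G(u_i)\cap N_G(u_j)|-\Bigl|\bigcap_{i=1}^{3} N_G(u_i)\Bigr|.
\end{align*}
Since $\bigl|\bigcup_{i=1}^{3} N_G(u_i)\bigr|\le|V(G)|=n$ and the triple intersection is nonnegative, $\sum_{i=1}^{3} d_G(u_i)\le n+\sum_{1\le i<j\le 3}|N_G(u_i)\cap N_G(u_j)|$. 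Substituting this and the displayed bound on $e(H)$ into $e(G)=e(H)+\sum_{i=1}^{3} d_G(u_i)-3$ yields exactly the claimed inequality.

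I expect the only point requiring real care to be the parameter check in the first step: one must be certain that after deleting three vertices the part $V_1$ is still of minimum size and that the numeric constraints $10(k-1)-4\le n_1'+4(k-1)\le\min_{i\ge 2}n_i'$ persist, since otherwise the minimality of $k$ cannot legitimately be invoked. The hypothesis $10k-4\le n_1+4k\le\min_{i\ge 2}n_i$ is tailored precisely so that these survive the removal of one vertex from up to three parts; everything after that is routine edge bookkeeping via Proposition-free counting.
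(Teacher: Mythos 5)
Your proposal is correct and follows essentially the same route as the paper: delete the triangle, apply the minimality of $k$ to $G-\{u_1,u_2,u_3\}$ after verifying that the hypotheses persist for $k-1$, and recover $\sum_i d_G(u_i)$ via inclusion--exclusion bounded by $n+\sum_{i<j}|N_G(u_i)\cap N_G(u_j)|$. Your explicit verification of the surviving parameter constraints is in fact slightly more detailed than the paper's, which merely asserts them.
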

\begin{proof}
{Note} that $G-\{u_1,u_2,u_3\}$ contains no copy of $(k-1)K_3$ since $G$ contains no copy of $kK_3$.  Moreover, we have {$10(k-1)-4\le n'_1+4(k-1)\le  \min_{i\ge 2} \{n'_i\}$ }.
 By the minimality of $k$,  we know
 $e(G-\{u_1,u_2,u_3\})\leq ex(K_{n_1',n_2',\dots, n_r'},(k-1)K_3)\leq(k-2)(n'-n'_1)+f_3(n'_1-(k-2),n_2',\dots,n_r')${.}
  Thus, 
 \begin{align}\label{triep}
 	e(G)=&\notag e(G-\{u_1,u_2,u_3\})+\sum_{i=1}^{3}|N_{G}(u_i)|-3\\
 	\le& (k-2)(n'-n'_1)+f_3(n'_1-(k-2),n_2',\dots,n_r')+\sum_{i=1}^{3}|N_{G}(u_i)|-3.
 \end{align}
By the Principle of Inclusion-Exclusion, 
\begin{align}\label{tridegree}
\sum_{i=1}^{3}|N_{G }(u_i)|=&\notag|\bigcup_{i=1}^{3}N_{G}(u_i)|+\sum_{1\le i<j\le 3}|N_{G}(u_i))\cap N_{G}(u_j)|-|\bigcap_{i=1}^3N_{G}(u_i)|\\\notag
\le &|V(G)|+\sum_{1\le i<j\le 3}|N_{G}(u_i)\cap N_{G}(u_j)|\\
= &n+\sum_{1\le i<j\le 3}|N_{G}(u_i)\cap N_{G}(u_j)|. 
\end{align}
Combining inequalities (\ref{triep}) and (\ref{tridegree}), we see
\begin{align*}
e(G)\le (k-2)(n'-n'_1)+f_3(n'_1-(k-2),n_2',\dots,n_r')+\sum_{1\le i<j\le 3}|N_{G}(u_i)\cap N_{G}(u_j)|+n-3{.}
\end{align*}
\end{proof}

{We finish our proof in the following two cases.

{\noindent \bf Case 1. $|N_{G}(u_i)\cap N_{G}(u_j)|\leq 6(k-1)$ for every two distinct vertices $u_i$ and $u_j$  in $\{u_1,u_2,u_3\}$.}}



 
Recall that  $G-\{u_1,u_2,u_3\}\subseteq K_{n_1', n_2',\dots,n_r'}$.  We shall consider the two cases when $1\in \{\alpha,\eta,\xi\}$ and $1\notin \{\alpha,\eta,\xi\}$.
First, assume  $1\in \{\alpha,\eta,\xi\}$. Without loss of generality, say $1=\alpha$. Then $n_1'=n_1-1$ and   $\{n_2',n_3',\dots,n_r'\}=\{n_2,n_3,\dots,n_r\}\setminus\{n_\eta,n_\xi\}\cup\{n_\eta-1,n_\xi-1\}$. 
By Claim \ref{Claim-oneK3},
we have \begin{align}  
e(G)\le&\notag (k-2)(n-3-n_1+1)+18(k-1)+n-3\\\notag&+f_3(\{n_1-1-k+2,n_2,n_3,\dots,n_r\}\setminus\{n_\eta,n_\xi\}\cup\{n_\eta-1,n_\xi-1\})\\\notag
=&(k-1)(n-n_1)-2(k-2)-(n-n_1)+18(k-1)+n-3\\\notag
&+f_3(\{n_1-k+1,n_2,n_3,\dots,n_r\}\setminus\{n_\eta,n_\xi\}\cup\{n_\eta-1,n_\xi-1\})\\\notag
=&(k-1)(n-n_1)+n_1+16(k-1)-1\\&+f_3(\{n_1-k+1,n_2,n_3,\dots,n_r\}\setminus\{n_\eta,n_\xi\}\cup\{n_\eta-1,n_\xi-1\}).\notag 
\end{align}

{
Now, we assume  $1\notin \{\alpha,\eta,\xi\}$. Then 
$n_i'=n_i-1> n_1$ for any $i\in  \{\alpha,\eta,\xi\}$, and $n_j'=n_j$ for any $j\in [r]\setminus\{\alpha,\eta,\xi\}$.  By Claim \ref{Claim-oneK3}, 
\begin{align}\label{goodtrino}
	e(G)\le&\notag (k-2)(n-3-n_1)+18(k-1)+n-3\\&+\notag f_3(\{n_1-k+2,n_2,n_3,\dots,n_r\}\setminus\{n_\alpha,n_\eta,n_\xi\}\cup \{n_\alpha-1,n_\eta-1,n_\xi-1\})\\\notag
	=&(k-1)(n-n_1)+15(k-1)+n_1\notag
	\\ &+f_3(\{n_1-k+2,n_2,n_3,\dots,n_r\}\setminus\{n_\alpha,n_\eta,n_\xi\}\cup \{n_\alpha-1,n_\eta-1,n_\xi-1\}).
\end{align} 

Note that $n_\alpha -1\ge n_1-k+2$. 
By Proposition \ref{f3property}, 
\begin{align}\label{6kno11}
	f_3&\notag(\{n_1-k+2,n_2,n_3,\dots,n_r\}\setminus\{n_\alpha,n_\eta,n_\xi\}\cup \{n_\alpha-1,n_\eta-1,n_\xi-1\})\\&\le
	f_3(\{n_1-k+1,n_2,n_3,\dots,n_r\}\setminus\{n_\eta,n_\xi\}\cup \{n_\eta-1,n_\xi-1\})+n_\alpha-(n_1-k+2).
\end{align}
Thus, combining inequalities (\ref{goodtrino}) and (\ref{6kno11}), 
\begin{align} \label{goodtrifinal}
	e(G)\le& \notag (k-1)(n-n_1)+15(k-1)+n_1+n_\alpha-(n_1-k+2)\\\notag
	&+f_3(\{n_1-k+1,n_2,n_3,\dots,n_r\}\setminus\{n_\eta,n_\xi\}\cup \{n_\eta-1,n_\xi-1\})\\\notag
	=&(k-1)(n-n_1)+16(k-1)+n_\alpha-1\\
	&+f_3(\{n_1-k+1,n_2,n_3,\dots,n_r\}\setminus\{n_\eta,n_\xi\}\cup \{n_\eta-1,n_\xi-1\}).
\end{align}

In both cases ($1\in \{\alpha,\eta,\xi\}$ and $1\notin  \{\alpha,\eta,\xi\}$), we have inequality (\ref{goodtrifinal}) holds. 
Since $n_1-k+1+2\leq  \min_{i\ge 2} \{n_i\}$,
by Proposition \ref{f3property2} } and the inequality (\ref{goodtrifinal}), 
\begin{align*}
e(G)\le& (k-1)(n-n_1)+n_{\alpha}+16(k-1)-1+f_3(n_1-k+1,n_2,n_3,\dots,n_r)-(n-k+1)\\
&+{\max\{n_1-(k-1)+2, n_{\eta}-n_1+(k-1)+1\}}\\
=&(k-1)(n-n_1)+f_3(n_1-k+1,n_2,n_3,\dots,n_r)\\
&+{17k-18-n+\max\{n_{\alpha}+n_1-k+3, n_{\eta}+n_{\alpha}-n_1+k\}}.
\end{align*}
Since  {$n-n_{\alpha}-n_1\ge (r-2)n_2\ge 2n_2\ge 20k-8$ and $n+n_1-n_{\eta}-n_{\alpha}\ge (r-1)n_1\ge 3n_1\ge 18k-12$,} we have 
$$e(G)\le (k-1)(n-n_1)+f_3(n_1-k+1,n_2,n_3,\dots,n_r),$$
which contradicts  the inequality (\ref{ineq-G-edge}).

{
{\noindent \bf Case 2. There exist two distinct vertices $u_i$ and $u_j$  in $\{u_1,u_2,u_3\}$ such that 
	$|N_{G}(u_i)\cap N_{G}(u_j)|> 6(k-1)$.}
}

Without loss of generality, we may assume $|N_{G}(u_2)\cap N_{G}(u_3)|\ge 6(k-1)+1$. Then we have the following claim. 
\begin{claim}\label{findtri6k} 
	For  any  vertex set $A\subseteq V(G)$, if $|N_G(u_2)\cap N_G(u_3)\cap A|\ge 6(k-1)+1$, then there is a vertex $u_0$ with  $u_0\in A\cap N_G(u_2)\cap N_G(u_3)$ such that $|N_G(u_0)\cap N_G(u_i)|\le 3(k-1)$ for any $i\in \{2,3\}$. 
\end{claim}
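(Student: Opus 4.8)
The plan is to prove the claim by contradiction, exhibiting a copy of $kK_3$ in $G$ and thereby contradicting the assumption that $G$ is $kK_3$-free. Suppose no vertex $u_0$ with the asserted property exists, and set $W:=A\cap N_G(u_2)\cap N_G(u_3)$, so that $|W|\ge 6(k-1)+1$ by hypothesis. Then every $w\in W$ satisfies $|N_G(w)\cap N_G(u_2)|\ge 3(k-1)+1$ or $|N_G(w)\cap N_G(u_3)|\ge 3(k-1)+1$. Assigning to each $w\in W$ an index in $\{2,3\}$ witnessing this and applying the pigeonhole principle, I obtain an index $i^{*}\in\{2,3\}$ and a subset $W^{*}\subseteq W$ with $|W^{*}|\ge\lceil(6(k-1)+1)/2\rceil=3(k-1)+1$ such that $|N_G(w)\cap N_G(u_{i^{*}})|\ge 3(k-1)+1$ for every $w\in W^{*}$. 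By symmetry I may assume $i^{*}=2$.

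The point now is that only one triangle needs to be built directly near $u_2$, because Claim \ref{abadvertex} supplies the remaining ones for free: it guarantees that $G-\{u_2\}$ contains $k-1$ pairwise vertex-disjoint triangles $S_1,\dots,S_{k-1}$. Put $R:=V(S_1)\cup\cdots\cup V(S_{k-1})$, so $|R|=3(k-1)$ and $u_2\notin R$. Since $|W^{*}|\ge 3(k-1)+1>|R|$, I can choose $w\in W^{*}\setminus R$; and since $|N_G(w)\cap N_G(u_2)|\ge 3(k-1)+1>|R|$, I can then choose $z\in\bigl(N_G(w)\cap N_G(u_2)\bigr)\setminus R$. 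The vertices $w,z,u_2$ are pairwise distinct (because $w,z\in N_G(u_2)$ and $z\in N_G(w)$) and span a triangle in $G$ (as $w\sim u_2$, $z\sim u_2$, and $w\sim z$), and this triangle avoids $R$, hence is disjoint from each $S_j$. Therefore $S_1,\dots,S_{k-1},\{w,z,u_2\}$ form a copy of $kK_3$ in $G$, a contradiction, completing the proof.

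The main obstacle, really the only nontrivial idea, is to realize that the $k-1$ auxiliary triangles should be extracted via Claim \ref{abadvertex} rather than built out of the set $W$ itself: the triangles one naturally reads off from the structure of $W$ (such as $\{w,z,u_2\}$ with $z\in N_G(w)\cap N_G(u_2)$, or $\{w,w',u_2\}$ with $w,w'\in W$ adjacent) all pass through $u_2$, so at most one of them is usable in a $kK_3$, and the same defect applies with $u_3$. Once Claim \ref{abadvertex} is invoked, everything reduces to the two elementary counting inequalities $|W^{*}|>3(k-1)$ and $|N_G(w)\cap N_G(u_2)|>3(k-1)$; note that both are tight, which explains why the hypothesis is stated as $|W|\ge 6(k-1)+1$ (so that one half of $W$ still exceeds $3(k-1)$) and why the conclusion uses the threshold $3(k-1)=|R|$.
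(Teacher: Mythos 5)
Your proof is correct and rests on the same key idea as the paper's: invoke Claim \ref{abadvertex} to obtain a copy of $(k-1)K_3$ avoiding $u_i$, and then use the size hypotheses to find one more triangle through $u_i$ disjoint from it, contradicting $kK_3$-freeness. The paper merely organizes this slightly differently (choosing $u_0$ outside both copies $\mathcal{T}_2\cup\mathcal{T}_3$ at once instead of your pigeonhole step), so the two arguments are essentially identical.
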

\begin{proof}
By Claim \ref{abadvertex},
denote by $\mathcal{T}_i$ a  copy of $(k-1)K_3$ in $G\setminus\{u_i\}$ for any $i\in \{2,3\}$. Since $|N_G(u_2)\cap N_G(u_3)\cap A|\ge 6(k-1)+1$ and $|\mathcal{T}_1|=|\mathcal{T}_2|=3(k-1)$, there is a vertex, say $u_0$, such that $u_0\in  (N_G(u_2)\cap N_G(u_3)\cap A)\setminus (\mathcal{T}_1\cup \mathcal{T}_2)$. 
If $|N_{G}(u_0)\cap N_{G}(u_i)|\ge 3(k-1)+1$ for some $i\in \{2,3\}$, then there exists a vertex $w_i\in (N_{G}(u_0)\cap N_{G}(u_i))\setminus \mathcal{T}_i$. {Notice that} $\{w_i,u_0,u_i\}\cap \mathcal{T}_i=\emptyset$ and $u_0u_iw_iu_0$ is a triangle. So the disjoint union of $u_0u_iw_iu_0$ and $ \mathcal{T}_i$ is a copy of $kK_3$ in $G$, a contradiction to the choice of $G$.  Thus, $|N_G(u_0)\cap N_G(u_i)|\le 3(k-1)$ for any $i\in \{2,3\}$. 
\end{proof}

By setting $A=V(G)$ in Claim \ref{findtri6k}, there exists a vertex $u_0$ with  $u_0\in   N_G(u_2)\cap N_G(u_3)$ such that $|N_G(u_0)\cap N_G(u_i)|\le 3(k-1)$ for any $i\in \{2,3\}$.  Note that $u_0u_2u_3u_0$ is a triangle in $G$.  
Recall  $u_2\in V_\eta$ and $u_3\in V_\xi$. Assume $u_0\in V_{\alpha'}$.

According to the arbitrariness of the choice  of $K^*_3$ in $G$, we know that Claims \ref{Claim-oneK3} 
and \ref{findtri6k} are applicable to the triangle $u_0u_2u_3u_0$ by replacing $K^*_3$ with $u_0u_2u_3u_0$ and  replacing $S$ with $\{\alpha',\eta,\xi\}$. The remaining proof is divided into two subcases.


 {\bf Subcase 1. $ 1\in \{\alpha',\eta,\xi\}$.}

{Note that} $n_1'=n_1-1$ and $n'=n-3$. Moreover, we have $|N_G(u_2)\cap N_G(u_3)|\le n-n_\eta-n_\xi$ by $u_2\in V_\eta$ and $u_3\in V_\xi$. Let $\{1,s,\ell\}=\{\alpha',\eta,\xi\}$.
Due to $\{s,\ell\}\cap\{\eta,\xi\}\neq\emptyset$, without loss of generality, we may assume $s\in \{\eta,\xi\}$.
Since $|N_G(u_0)\cap N_G(u_i)|\le 3(k-1)$ for any $i\in \{2,3\}$,
by Claim \ref{Claim-oneK3},  we know

\begin{align*}
e(G)\le&(k-2)(n-3-n_1+1)+(6(k-1)+|N_G(u_2)\cap N_G(u_3)|)+n-3
\\
&+f_3(\{n_1-k+1,n_2,n_3,\dots,n_r\}\setminus\{n_s,n_\ell\}\cup \{n_s-1,n_\ell-1\})\\
\leq&  (k-1)(n-n_1)+4k-5+n+n_1-n_\eta-n_\xi
\\
&+f_3(\{n_1-k+1,n_2,n_3,\dots,n_r\}\setminus\{n_s,n_\ell\}\cup \{n_s-1,n_\ell-1\})\\
\le& {(k-1)(n-n_1)+4k-5+n+n_1-n_\eta-n_\xi +f_3(\{n_1-k+1,n_2,n_3,\dots,n_r\})}\\
& {-(n-(k-1))+\max\{(n_1 -k+1)+2,n_s -(n_1-k+1)+1\}}\\
\le & {(k-1)(n-n_1)+f_3(\{n_1-k+1,n_2,n_3,\dots,n_r\})}
\\& {+5k-6+n_1-n_\eta-n_\xi +\max\{n_1 -k+3,n_s -n_1+k+2\}}
\\
\le&  {(k-1)(n-n_1)+f_3(\{n_1-k+1,n_2,n_3,\dots,n_r\}),}
\end{align*}
 {where the third  inequality holds by Proposition \ref{f3property2}  and the last inequality holds by $4k-3+2n_1\le n_\eta+n_\xi$ and $6k-4+n_s\le n_\eta+n_\xi$. }

{\bf Subcase 2. $1\notin \{\alpha',\eta,\xi\}$.}

By setting $A=V_1$ in Claim \ref{findtri6k},  if $|N(u_2)\cap N(u_3)\cap V_1|\ge 6(k-1)+1$, then there exists a vertex $u'_0$ with  $u'_0\in    N_G(u_2)\cap N_G(u_3)\cap V_1$ such that $|N_G(u'_0)\cap N_G(u_i)|\le 3(k-1)$ for any $i\in \{2,3\}$.
Note that $u'_0u_2u_3u'_0$ is also a triangle in $G$.  This cases reduces to  Subcase 1.
 Hence, we may assume $|N(u_2)\cap N(u_3)\cap V_1|\le 6(k-1)$.  Then
\begin{align*}
|N_G(u_2)\cap N_G(u_3)|&=|N_G(u_2)\cap N_G(u_3)\cap (V(G)\setminus V_1)|+|N_G(u_2)\cap N_G(u_3)\cap  V_1|\\&\le n-n_\eta-n_\xi-n_1+6(k-1).
\end{align*} 
Recall that $n'=n-3$, $n'_1=n_1$, and $|N_G(u_0)\cap N_G(u_i)|\le 3(k-1)$ for any $i\in \{2,3\}$.
By Claim \ref{Claim-oneK3}, 
\begin{align}\label{ineq-fina-cla}
e(G)\le&\notag (k-2)(n-3-n_1)+(6(k-1)+n-n_\eta-n_\xi-n_1+6(k-1))+n-3\\\notag
&+f_3(\{n_1-k+2,n_2,n_3,\dots,n_r\}\setminus\{n_{\alpha'},n_{\eta},n_{\xi}\}\cup \{n_{\alpha'}-1,n_{\eta}-1,n_{\xi}-1\})\\\notag
=&\notag(k-1)(n-n_1)+9k-9+n-n_\eta-n_\xi\\&+f_3(\{n_1-k+2,n_2,n_3,\dots,n_r\}\setminus\{n_{\alpha'},n_{\eta},n_{\xi}\}\cup \{n_{\alpha'}-1,n_{\eta}-1,n_{\xi}-1\}).
\end{align}
Since $ n_1+4k\le  \min_{i\ge 2} \{n_i\}$, we know  $ n_1-k+3\leq\min_{i\ge 2} \{n_i\}$. By Proposition \ref{f3property} and the inequality (\ref{ineq-fina-cla}),
\begin{align}\label{badtrief}
	e(G)\le& \notag(k-1)(n-n_1)+9k-9+n-n_\eta-n_\xi+n_\xi-(n_1-k+2)\\\notag&+f_3(\{n_1-k+1,n_2,n_3,\dots,n_r\}\setminus\{n_{\alpha'},n_{\eta}\}\cup \{n_{\alpha'}-1,n_{\eta}-1\})\\\notag
	=&(k-1)(n-n_1)+10k-11+n-n_\eta-n_1\\&+f_3(\{n_1-k+1,n_2,n_3,\dots,n_r\}\setminus\{n_{\alpha'},n_{\eta}\}\cup \{n_{\alpha'}-1,n_{\eta}-1\}). 
\end{align}
{Proposition \ref{f3property2} }  implies that 
\begin{align}\label{ineq-fina-cor}
f_3&\notag(\{n_1-k+1,n_2,n_3,\dots,n_r\}\setminus\{n_{\alpha'},n_{\eta}\}\cup \{n_{\alpha'}-1,n_{\eta}-1\})\\
&\le f_3(n_1-k+1,n_2,\dots,n_r)+{\max\{n_1-(k-1)+2, n_{\eta}-(n_1-k+1)+1 \}-n+k-1.}
\end{align}
Since { $n_1\ge 6k-4$ and $n_\eta\ge n_1+4k\ge 10k-4$,}
combining  inequalities (\ref{badtrief}) and (\ref{ineq-fina-cor}), we have \begin{align*}
e(G)\le &(k-1)(n-n_1)+10k-11+n-n_\eta-n_1\\&\notag+f_3(n_1-k+1,n_2,\dots,n_r)+{\max\{n_1-k+3, n_{\eta}-n_1+k\}-n+k-1.}\\
=&(k-1)(n-n_1)+f_3(n_1-k+1,n_2,\dots,n_r)+11k-12\\
&- \max\{n_{\eta}+k-3, 2n_1-k\}\\
\le &(k-1)(n-n_1)+f_3(n_1-k+1,n_2,\dots,n_r), 
\end{align*}
which contradicts to the inequality (\ref{ineq-G-edge}).
This completes the proof of the inequality (\ref{ineq-less}), and so 
 Theorem \ref*{mainkk3} follows. 

\noindent{\bf Acknowledgements.}
The author would like to thank the editor and the two referees very much for their valuable suggestions and comments.
Junxue Zhang was partially supported by the National Natural Science Foundation of China (Nos. 12161141006 and 12111540249), the Natural Science Foundation of Tianjin (Nos. 20JCJQJC00090 and 20JCZDJC00840),  the Fundamental Research Funds for the Central Universities, Nankai University (No. 63223063).


\end{document}